\documentclass[11pt]{amsart}
\usepackage{mathrsfs} 
\usepackage[margin=1.25in]{geometry}
\usepackage{ifthen}
\usepackage{graphicx}
\usepackage{epsfig}
\usepackage{dsfont, amssymb}
\usepackage{amsfonts,dsfont,amssymb,amsthm,stmaryrd,bbm}
\usepackage{amsmath}
%\usepackage[cmex10]{amsmath} % Use the [cmex10] option 
   % to ensure compliance with IEEE Xplore (see bare_conf.tex)

%\usepackage{showkeys}
\usepackage{xcolor}

\usepackage[toc,page]{appendix} 
\usepackage{hyperref,mathtools}
\hypersetup{colorlinks=true,pdftitle=""
%,pdftex
}

\setlength{\topmargin}{-0.45 in}     %% dist from automatic 1in baseline to top of header
\setlength{\oddsidemargin}{0.3in}  %% automatic 1in baseline
\setlength{\evensidemargin}{0.3in} 
\setlength{\textheight}{9in}
\setlength{\textwidth}{6.1in} 
\setlength{\footskip}{0.55in}  %% dist(footer bo

\let\originallesssim\lesssim
\let\originalgtrsim\gtrsim

\DeclareRobustCommand{\lesssim}{%
  \mathrel{\mathpalette\lowersim\originallesssim}%
}
\DeclareRobustCommand{\gtrsim}{%
  \mathrel{\mathpalette\lowersim\originalgtrsim}%
}

\makeatletter
\newcommand{\lowersim}[2]{%
  \sbox\z@{$#1<$}%
  \raisebox{-\dimexpr\height-\ht\z@}{$\m@th#1#2$}%
}
\makeatother

\newtheorem{conj}{Conjecture}[section]
\newtheorem{thm}{Theorem}[section]%[section]

\newtheorem{remark}[thm]{Remark}
\newtheorem{lem}[thm]{Lemma}
\newtheorem{prop}[thm]{Proposition}
\newtheorem{coro}[thm]{Corollary}
\newtheorem{ques}[conj]{Question}

\newtheorem{defn}[thm]{Definition}
\newtheorem{cor}[thm]{Corollary}

\newcommand\independent{\protect\mathpalette{\protect\independent}{\perp}} 
\def\independent#1#2{\mathrel{\rlap{$#1#2$}\mkern2mu{#1#2}}}

\DeclareMathOperator{\Var}{Var}

%\newcommand{\lam}{\lambda}

%%%%%%%%%%%%

%\newcommand{\e}{\mathrm{e}}

%\newcommand{\supp}{\mathrm{supp}}

%\newcommand{\R}{\mathbb{R}}

%\newcommand{\Z}{\mathbb{Z}}
%\newcommand{\F}{\mathcal{F}}

%\newcommand{\setS}{\hbox{${\bf s}$}}

%\def\R{{\bf R}}

%\newcommand{\Hess}{\text{Hess}}

\def\Var{{\rm Var}}

\def\phi{\varphi}

\def\bee{\begin{eqnarray*}}
\def\ene{\end{eqnarray*}}
\usepackage{mathtools}

%%%%%%%%

\hyphenation{op-tical net-works semi-conduc-tor}

\begin{document}

\title[Discrete Complement of Lypunov's Inequality]{A discrete complement of Lyapunov's inequality and its information theoretic consequences}
\author[J. Melbourne \& G. Palafox-Castillo]{James Melbourne and Gerardo Palafox-Castillo}

\begin{abstract}
    We establish a reversal of Lyapunov's inequality for monotone log-concave sequences, settling a conjecture of Havrilla-Tkocz and Melbourne-Tkocz. A strengthened version of the same conjecture is disproved through counter example.  We also derive several information theoretic inequalities as consequences.  In particular sharp bounds are derived for the varentropy, R\'enyi entropies, and the concentration of information of monotone log-concave random variables.  Moreover, the majorization approach utilized in the proof of the main theorem, is applied to derive analogous information theoretic results in the symmetric setting, where the Lyapunov reversal is known to fail.
\end{abstract}

\maketitle
% Things to do:
% \begin{enumerate}
% %     \item 
% %         Check if we can prove the stronger conjecture from ``Sharp Khinchin-type inequalities for symmetric discrete
% % uniform random variables'' of Havrilla-Tkocz \cite{havrilla2019sharp}, that
% % \begin{align*}
% %     t \mapsto \log \left( (t + \gamma) \sum_{n=1}^N y_n^{t/\gamma} \right)
% % \end{align*}
% % is concave for $t \in (-\gamma,\infty)$ if $y_n$ is monotone and concave.  For this, our methods likely can show that it is enough to prove the result when $y_n = A n + B$.
% % \\

%     \item
% {\color{blue} Gerardo:} Write down which new applications are true for symmetric variables. (Don't need proofs)

%     \item
% {\color{blue} James:} Distill section 3, write intro and background on applications.

% \end{enumerate}
\section{Introduction}

In this paper we prove the following reversal of Lyapunov's inequality\footnote{By Lyapunov's inequality we refer to the fact that $p \mapsto \log \| f\|_p^p$ is convex in $p$ for a general measurable function $f$ and measure.}, conjectured in \cite{melbourne2020reversals} and \cite{havrilla2019sharp}.
\begin{thm} \label{thm: conjecture}
    For $x$, a monotone, log-concave sequence in $\ell_1$, the function
    \begin{align*}
        t \mapsto \log \left(t \sum_i x_i^t \right)
    \end{align*}
    is strictly concave for $t \in (0,\infty)$.
\end{thm}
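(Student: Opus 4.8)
The plan is to convert the statement into a second-order inequality and then into a sharp estimate on the varentropy of a monotone log-concave probability sequence. Set $\Lambda(t)=\log\sum_i x_i^t$ and let $\mu_t$ be the probability sequence $\mu_t(i)=x_i^t/\sum_j x_j^t$; note first that the $\ell_1$ and log-concavity hypotheses force $\sum_i x_i^t<\infty$ for every $t>0$, since a monotone log-concave sequence in $\ell_1$ decays at least geometrically, so all these objects make sense. One has $\frac{d^2}{dt^2}\log\!\big(t\sum_i x_i^t\big)=-t^{-2}+\Lambda''(t)$, and by the standard exponential-family identity $\Lambda''(t)=\Var_{\mu_t}(\log x_i)$. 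Since $\mu_t$ is again monotone and log-concave and $\log x_i=t^{-1}\log\mu_t(i)+\mathrm{const}$, this equals $t^{-2}\Var_{\mu_t}(\log\mu_t(i))$, i.e.\ $t^{-2}$ times the varentropy of $\mu_t$. Hence the theorem is exactly equivalent to the assertion that every monotone log-concave probability sequence $\nu$ satisfies $\Var_\nu(\log\nu)<1$, and it is this that I would prove.

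Before attacking the general case I would pin down the model case and the sharp constant. For a geometric sequence $x_i=q^{i-1}$ one has $t\sum_i x_i^t=t/(1-q^t)$ and a direct computation reduces $(\log(\cdot))''<0$ to the elementary inequality $s<2\sinh(s/2)$; the varentropy of the geometric law with ratio $q$ is $(\log q)^2 q/(1-q)^2$, which is strictly below $1$ and tends to $1$ as $q\to1^-$, so $1$ is sharp and cannot be attained. It helps to see the continuous analogue, which is clean: writing $\varphi=e^{-\psi}$ with $\psi$ convex increasing and $\chi=\psi^{-1}$ (so $\chi$ is concave and increasing), integration by parts and the substitution $u=tr$ give $t\int_0^\infty\varphi^t=\int_0^\infty e^{-u}\,t\,\chi(u/t)\,du$; since $\frac{d^2}{dt^2}\big[t\,\chi(u/t)\big]=t^{-3}u^2\chi''(u/t)\le0$, the right-hand side is a positive concave function of $t$ and hence log-concave by the arithmetic--geometric mean inequality. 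Carrying out the same computation in the discrete setting produces instead $t\sum_i x_i^t=\int_0^\infty e^{-u}\,t\,M(u/t)\,du$ with $M(r)=\#\{i:x_i\ge e^{-r}\}$ the integer counting function of the convex exponent sequence $(-\log x_i)$, and $M$ meets a concave function only at its jump points; this integer rounding is precisely what turns the equality of the continuous (and geometric) case into the strict inequality we are after.

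The core of the proof is therefore a majorization argument that dominates a general monotone log-concave $\nu$ by geometric (or truncated-geometric) laws. With $c_i=-\log\nu_i$, so that $d_i:=c_{i+1}-c_i$ is nonnegative and nondecreasing, I would use the bilinear identity
\[
\Var_\nu(c)=\sum_{k,l}d_k\,d_l\,F(\min(k,l))\,\bar F(\max(k,l)),
\]
with $F$ the distribution function of $\nu$ and $\bar F=1-F$, and show that "flattening" $d$ toward a constant — decreasing a coordinate $d_k$ down to the preceding value $d_{k-1}$ and renormalizing — never decreases $\Var_\nu(c)$; iterating sends $\nu$ to the geometric law with ratio $e^{-d_1}$ (the case $\nu_1=\nu_2$, where one lands on a uniform-then-geometric law, is handled the same way), whose varentropy the model computation puts strictly below $1$, and a routine approximation passes from finitely supported to general $\nu$. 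The hard part will be this monotonicity-under-flattening lemma: $\Var_\nu(c)$ is neither convex nor concave in $d$, so the sign of its variation under a single-coordinate perturbation must be teased out of the convexity constraint $d_{k-1}\le d_k\le d_{k+1}$ — and it is here that the discreteness (the non-concavity of $M$) has to be confronted directly, the same mechanism that resurfaces, with a larger constant, in the symmetric setting treated later in the paper.
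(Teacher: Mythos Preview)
Your reduction is correct and worth recording: the strict concavity of $t\mapsto\log\bigl(t\sum_i x_i^t\bigr)$ is equivalent, via $\Phi''(t)=t^{-2}\bigl(\Var_{\mu_t}(\log\mu_t)-1\bigr)$, to the uniform varentropy bound $\Var_\nu(\log\nu)<1$ over monotone log-concave probability sequences $\nu$. The paper in fact only states this implication in one direction (deriving the varentropy bound \emph{from} the theorem), but you are right that it goes both ways, and your verification of the geometric case agrees with the paper's Proposition~\ref{prop: Special case geo distr}.

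The trouble is the core step. Your ``monotonicity under flattening'' lemma --- that decreasing a single increment $d_k$ down to $d_{k-1}$ and renormalizing never decreases the varentropy --- is false. Take $\nu=(0.5,\,0.35,\,0.15)$, which is decreasing and log-concave since $0.35^2=0.1225>0.075=0.5\cdot 0.15$. One computes $\Var_\nu(-\log\nu)\approx 0.169$. Flattening replaces $\nu_3$ by $\nu_2^2/\nu_1=0.245$ and renormalizes to $\nu'\approx(0.4566,\,0.3196,\,0.2237)$, a three-point geometric law with ratio $0.7$; its varentropy is $\approx 0.080$. So flattening \emph{halved} the varentropy rather than increasing it, and the iteration you propose moves in the wrong direction. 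The difficulty you flag in your last sentence is not a technicality to be ``teased out'' but a genuine obstruction: matching the first ratio $e^{-d_1}$ is simply the wrong normalization for comparing to a geometric law.

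The paper's route is different in kind. Rather than a local variational move, it fixes $p>1$, chooses the geometric $z$ with $\sum_i z_i^p=\sum_i x_i^p$, and proves a one-shot majorization: the counting functions $F_x(t)=\#\{i:x_i>t\}$ and $F_z(t)$ cross at most twice (Proposition~\ref{prop:doublecrossing}), which by a standard convex-ordering argument forces $\sum_i x_i^q\ge\sum_i z_i^q$ for every $q\in(1,p)$ (Lemma~\ref{lem: geometric minimizes}). Concavity for $x$ then follows from concavity for $z$. If you want to rescue a comparison-to-geometric strategy, the right invariant to match is a moment $\sum x_i^p$, not the first log-ratio; your layer-cake representation $t\sum_i x_i^t=\int_0^\infty e^{-u}\,tM(u/t)\,du$ is in fact close in spirit to how the paper sets this up.
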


This is anticipated by affirmative results in the continuous setting dating back to Cohn \cite{Coh69} on $\mathbb{R}$, and Borell \cite{Bor73a} in $\mathbb{R}^d$.  However in contrast to the continuous setting, the requirement that $x$ is monotone cannot be dropped\footnote{Note that for continuous variables on $\mathbb{R}$, proving the result for monotone variables is equivalent to the general result since log-concavity is preserved under rearrangement, see for example \cite{melbourne2019rearrangement}.}, see \cite{melbourne2020reversals} for examples. Moreover, we will also provide a counter example to a strengthening of Theorem \ref{thm: conjecture} conjectured in \cite{melbourne2020reversals, havrilla2019sharp}, further differentiating the continuous and discrete settings.

The main novelty in the proof is to establish a majorization between the distribution function of a monotone log-concave sequence and its geometric counterpart.
Though we will not expound upon this outside of its application to this proof, it can be understood as a second order analog of the distributional majorization lemma utilized in \cite{NP00,melbourne2021transport}.  This alongside some further reductions, leaves one needing only the special case of a geometric sequence, which can be approached with direct computation, to complete the proof of Theorem \ref{thm: conjecture}.  

This effort fits within a more general pursuit, developing discrete analogs for the continuous convexity theory, which in recent investigation has connected information theory and convex geometry (see  \cite{MMX17:1} for background).  One instantiation is the effort to understand the behavior of the entropy of discrete variables under independent summation, see \cite{JY10, madiman2019majorization, melbourne2020reversals, madiman2021bernoulli, bobkov2020concentration}.  Another is the pursuit of discrete Brunn-Minkowski type inequalities \cite{GRST14, OV12, marsiglietti2020geometric, klartag2019poisson, gozlan2019transport, GG01, slomka2020remark, halikias2019discrete}.  In fact, in information theoretic language, the Brunn-Minkowski inequality can be understood as a ``R\'enyi entropy power'' inequality, see \cite{BC15:2, li2020further, BM16, marsiglietti2018entropy, MMX17:2, li2018renyi, rioul2018renyi, RS16}, and in this sense, as an information theoretic inequality as well.

We will see that Theorem \ref{thm: conjecture} yields several information theoretic consequences for discrete monotone variables.   We obtain sharp bounds on the varentropy to be compared to its continuous analog \cite{FMW16}, and utilize this to derive concentration of information, analogous to \cite{BM11:aop, FMW16}.  We give sharp reversals of the monotonicity of R\'enyi entropy general parameters, augmenting the recently obtained comparisons for the $\infty$-R\'enyi entropy given in \cite{melbourne2020reversals}, and as a consequence we obtain a sharp reverse entropy power type inequality for iid variables, tightening a result from \cite{melbourne2020reversals}. We mention that this reverse entropy power is the discrete analog of an entropic Rogers-Shephard inequality pursued by Madiman and Kontoyannis in \cite{madiman2016entropy}.  We also obtain a sharp comparison between the value of a log-concave sequence at its mean, and the value at its mode which we compare with the classical result of Darroch \cite{darroch1964distribution} for Bernoulli sums.  We will also obtain as a Corollary of our arguments that for the monotone log-concave variables of a fixed $p$-R\'enyi entropy, the geometric distribution has maximal $q$-R\'enyi entropy for $q \geq p$ and minimal $q$-R\'enyi entropy for $ q \leq p$.

As mentioned, Theorem \ref{thm: conjecture} can fail without the assumption of monotonicity.  In particular, symmetric variables do not necessarily satisfy the conclusion of Theorem \ref{thm: conjecture}.  However we will demonstrate that the majorization techniques used are robust enough to be applied in the symmetric case, and we use them to deliver sharp R\'enyi entropy comparisons, varentropy bounds, and concentration of information results in the symmetric setting.  We also establish the ``symmetric geometric'' distribution as the maximal (resp. minimal) $q$-R\'enyi entropy distribution for fixed $p$-R\'enyi entropy among discrete symmetric log-concave variables for $q \geq p$ (resp. $q \leq p$).

Let us outline the paper.  In Section \ref{sec: Applications}, we will define notation and derive applications of the main theorem.  In Section \ref{Sec: Proof of Theorem} we give the proof of Theorem \ref{thm: conjecture}.  In Section \ref{Sec: Disprove theorem} we give a counter example to the strengthening of Theorem \ref{thm: conjecture} conjectured in \cite{havrilla2019sharp, melbourne2020reversals}, while in Section \ref{sec: Symmetric variables}, we derive analogs of the consequences in Section \ref{sec: Applications} for symmetric log-concave variables.  In the Appendix \ref{sec: Appendix} we recall some elementary results from the theory of majorization for the convenience of the reader.

\section{Applications} \label{sec: Applications}

\subsection{Definitions}
For a real valued random variable $Y$, we let $\mathbb{E} Y$ denote its expectation, and denote its variance $\Var(Y) \coloneqq \mathbb{E} Y^2 - (\mathbb{E}Y)^2$.
\begin{defn}
Let $(E,\mu)$ be a measure space and $X$ an $E$ valued random variable with density function $f$ such that $\mathbb{P}(X \in A) = \int_A f d\mu$. We define the information content $I_X: E \to \mathbb{R}$, as $I_X(x) = - \log f(x)$. 
\end{defn}

To avoid confusion, in sections where we discuss the information content random variable $I_X(X)$, we will avoid the usual abuse of notation and write $H(f)$ for the entropy of a variable $X$.  Conversely, when there is no risk of confusion, and we are considering a single variable $X$, we will omit the subscript and write $I$ for the information content. We write $H_{(\mu)}(X) = H_{(\mu)}(f) \coloneqq \mathbb{E} I(X)$ in the general case. 
For example, when $E$ is discrete, and $\mu$ is the counting measure, $$\mathbb{E} I(X) = H(X)$$ is just the Shannon entropy of $X$.  Observe that when $\mu$ is a probability measure given by a random variable $Y$, then the expectation of the information content is given by the relative entropy (or Kullbeck-Leibler divergence), $H_{(\mu)}(f) = - D(X||Y)$, and the varentropy measures the deviation of $-I(X)$ from $D(X||Y)$.

In physical applications, it may be more natural to write the density of $X$ in terms of a potential $E$, $f(x) = e^{- E(x)}$, in which case,  $\mathbb{E} I(X)$ reflects the average energy of a system, and $V(X)$ the average fluctuation.

\begin{defn}
For a random variable $X$ taking values on a measure space $(E,\mu)$ with density function $f$, define the varentropy functional,
\begin{align}
    V(X) = \mathbb{E} (\log f(X) - \mathbb{E} \log f(X))^2.
\end{align}
\end{defn}

Unless specified, we will consider $E = \mathbb{Z}$ and $\mu$ the standard counting measure, so that the density function $x_n \coloneqq \mathbb{P}(X = n)$ of a variable $X$, can be expressed as a non-negative sequence.

\begin{defn}
A non-negative sequence $x_i$ is log-concave when 
\[
    x_i^2 \geq x_{i-1} x_{i+1}
\]
and $i \leq j \leq k$ with $x_i x_k >0$ implies $x_j > 0$.
\end{defn}

We consider a sequence to be increasing when $x_i x_{i+1} > 0$ implies $x_{i+1} \geq x_i$, and decreasing when $x_i x_{i+1} > 0$ implies $x_{i+1} \leq x_i$.  A sequence is monotone when it is either increasing or decreasing.  
\begin{defn}
A random variable $X$ taking values in $\mathbb{Z}$ is log-concave when the sequence  $x_i \coloneqq \mathbb{P}(X = i)$ is log-concave.   The variable $X$ is monotone when the sequence $x_i$ is monotone.
\end{defn}

We say a non-negative sequence $x_i$ belongs to $\ell_p$ when $\sum_{i \in \mathbb{Z}} x_i^p < \infty$.  Note that when $x_i$ is log-concave, $x_i$ belonging to $\ell_1$ implies that $x_i$ belongs to $\ell_p$ for all $p \in (0,\infty)$.

\subsection{Varentropy bounds}

\begin{thm} \label{thm: varentropy bounds}
    For $X$ a monotone log-concave variable taking values in $\mathbb{Z}$ with the usual counting measure, 
    \begin{align*}
        V(X) < 1.
    \end{align*}
\end{thm}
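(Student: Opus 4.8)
The plan is to realize the varentropy as a second derivative of the cumulant-type function appearing in Theorem \ref{thm: conjecture}, and then exploit its strict concavity. Write $\phi(t) = \log\bigl(t\sum_i x_i^t\bigr)$ for a monotone log-concave sequence $(x_i)$ normalized so that $\sum_i x_i = 1$, i.e. the density of $X$. Differentiating, one finds that $\phi'(1)$ and $\phi''(1)$ involve the moments $\mathbb{E}\log x_X$ and $\mathbb{E}(\log x_X)^2$; a direct computation should give
\begin{align*}
    \phi''(1) = \mathbb{E}\bigl(\log x_X\bigr)^2 - \bigl(\mathbb{E}\log x_X\bigr)^2 - 1 = V(X) - 1,
\end{align*}
since the ``$t$'' prefactor contributes exactly the $-1$ (the term $\tfrac{d^2}{dt^2}\log t$ at $t=1$). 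Theorem \ref{thm: conjecture} gives strict concavity of $\phi$ on $(0,\infty)$, hence $\phi''(1) < 0$, which is precisely $V(X) < 1$.

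The first key step is therefore the bookkeeping: carefully differentiate $\phi(t) = \log t + \log\bigl(\sum_i x_i^t\bigr)$ twice, using that for log-concave $x \in \ell_1$ one has $x \in \ell_p$ for all $p>0$ so that the sums $\sum_i x_i^t |\log x_i|^k$ converge locally uniformly in $t$ near $t=1$, justifying differentiation under the summation sign. Setting $g(t) = \log\sum_i x_i^t$, one has $g'(t) = \sum_i x_i^t \log x_i / \sum_i x_i^t$ and $g''(t) = \operatorname{Var}_{t}(\log x_X)$ where the variance is taken with respect to the tilted measure $x_i^t / \sum_j x_j^t$; at $t=1$ this tilted measure is just the law of $X$, so $g''(1) = V(X)$. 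Combining with $(\log t)'' = -1/t^2$ evaluated at $t=1$ yields $\phi''(1) = V(X) - 1$.

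The second step is simply to invoke Theorem \ref{thm: conjecture}: strict concavity on the open interval $(0,\infty)$ forces $\phi''(1) < 0$ (strictly, since concavity is strict), giving $V(X) < 1$. I should note that one must rule out degeneracy — if $X$ is a point mass then $x \in \ell_1$ but the conclusion $V(X)=0<1$ holds trivially; more generally the statement of Theorem \ref{thm: conjecture} is about sequences in $\ell_1$, which covers every $\mathbb{Z}$-valued log-concave $X$, so no side condition is needed.

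The main obstacle, such as it is, is not conceptual but the convergence/differentiation justification and the sign of the prefactor contribution: one must make sure the ``$t$'' out front produces the clean $-1$ and not some $X$-dependent quantity, and that $\phi''$ is genuinely the pointwise second derivative (not merely a distributional one) so that strict concavity at the single point $t=1$ can be read off. Both are handled by the $\ell_p$ remark already recorded in the excerpt, which guarantees the relevant logarithmic moments are finite and the series can be differentiated termwise near $t=1$. An alternative, should one wish to avoid any interchange-of-limits argument, is to use a difference-quotient / three-point characterization of concavity: strict concavity of $\phi$ gives $\phi(1) > \tfrac12\phi(1-h) + \tfrac12\phi(1+h)$ for small $h$, and a Taylor expansion of both sides (with the sums now finite partial sums plus a controlled tail) recovers $V(X) - 1 \le 0$ in the limit, with strictness from the strict inequality at one scale.
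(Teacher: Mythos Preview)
Your proposal is correct and follows essentially the same argument as the paper: compute that $\phi''(1) = V(X) - 1$ and invoke the strict concavity of Theorem \ref{thm: conjecture} to conclude. The paper's proof is exactly this bare computation of $\Psi'$ and $\Psi''$ followed by the one-line appeal to concavity, without the extra care you take in justifying differentiation under the sum.
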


\begin{proof}
Define $\Psi(t) = \log \left(t \sum_n f^t(n) \right)$, then
\begin{align*}
    \Psi'(t) =  \frac{ \sum_n \log f(n) f^t(n) }{\sum_n f^t(n)}  + \frac 1 t
\end{align*}
and
\begin{align*}
    \Psi''(t) =  \frac{\left( \sum_n f^t(n) \right)\left( \sum_n \log^2 f(n) f^t(n) \right) - \left( \sum_n \log f(n) f^t(n) \right)^2}{ \left( \sum_n f^t(n) \right)^2} - \frac{1}{t^2}
\end{align*}
By concavity of $\Psi$, $\Psi''(1) = V(X) - 1 < 0$, and our result follows.
\end{proof}

The bound is sharp, the varentropy of a geometric distribution $Z_p$ with parameter $p$, can be explicitly computed as $V(Z_p) = \left( \frac{(1-p) \log (1-p)}{p} \right)^2$ which tends to $1$ with $p \to 0$.

\subsection{R\'enyi entropy comparisons}
\begin{defn}
For $X$ a random variable on $\mathbb{Z}$ , and $p \in (0,1) \cup (1,\infty)$ define
\begin{align*}
    H_p(X) \coloneqq \frac{ \log \left( \sum_i x_i^p \right)}{1-p},
\end{align*}
where $x_i \coloneqq \mathbb{P}(X = i)$.
Let $H_1(X) \coloneqq H(X) = - \sum_i x_i \log x_i$ and $H_\infty(X) = - \log \|x \|_\infty$ where $\|x\|_\infty \coloneqq \max_i x_i$ and $H_0(X) = \# \{i : x_i > 0\}$.
\end{defn}
\begin{thm} \label{thm: Renyi entropy reversal}
When $X$ is a monotone and log-concave variable taking values in $\mathbb{Z}$ then $p > q >0$ implies,
\begin{align*}
    H_p(X) > H_q(X) + \log \left( \frac{p^{\frac{1}{p-1}}}{q^{\frac{1}{q-1}}} \right)
\end{align*}
\end{thm}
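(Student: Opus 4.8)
The idea is to realize the stated inequality as a consequence of the strict concavity of $\Psi(t) = \log(t\sum_i x_i^t)$ from Theorem \ref{thm: conjecture}, exactly as the varentropy bound was extracted from $\Psi''(1) < 0$. First I would express $H_p(X)$ in terms of $\Psi$. Writing $S(t) = \sum_i x_i^t$, we have $\Psi(t) = \log t + \log S(t)$, and since $H_p(X) = \frac{\log S(p)}{1-p}$ for $p \neq 1$, we get $\log S(p) = (1-p) H_p(X)$, hence $\Psi(p) = \log p + (1-p) H_p(X)$, i.e.
\begin{align*}
    H_p(X) = \frac{\Psi(p) - \log p}{1 - p}.
\end{align*}
So the function $g(t) \coloneqq \frac{\Psi(t) - \log t}{1-t}$ (suitably interpreted at $t=1$ by continuity, where it equals $-\Psi'(1) + 1 = H_1(X)$ after noting $\Psi(1) = 0$) satisfies $g(p) = H_p(X)$ for all $p \in (0,\infty)$. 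The claim $H_p(X) > H_q(X) + \log\left(\frac{p^{1/(p-1)}}{q^{1/(q-1)}}\right)$ for $p > q$ then amounts to showing that $t \mapsto H_t(X) + \frac{\log t}{t-1}$, equivalently $t \mapsto g(t) - \frac{\log t}{1-t} = \frac{\Psi(t)}{1-t}$... wait — more precisely, since $\log\left(\frac{p^{1/(p-1)}}{q^{1/(q-1)}}\right) = \frac{\log p}{p-1} - \frac{\log q}{q-1}$, the inequality is
\begin{align*}
    H_p(X) - \frac{\log p}{p-1} > H_q(X) - \frac{\log q}{q-1},
\end{align*}
i.e. the function $\phi(t) \coloneqq H_t(X) - \frac{\log t}{t-1} = \frac{\Psi(t) - \log t}{1-t} + \frac{\log t}{1-t} = \frac{\Psi(t)}{1-t}$ is strictly increasing on $(0,\infty)$.

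So the whole theorem reduces to: \emph{if $\Psi$ is strictly concave on $(0,\infty)$ with $\Psi(1) = 0$, then $t \mapsto \frac{\Psi(t)}{1-t}$ is strictly increasing.} This is a clean one-variable calculus fact. I would prove it by noting $\frac{\Psi(t)}{1-t} = \frac{\Psi(t) - \Psi(1)}{1-t} = -\frac{\Psi(t)-\Psi(1)}{t-1}$, which is minus the slope of the secant line of $\Psi$ from $1$ to $t$. For a strictly concave function, the secant slope from a fixed point $1$ to a moving point $t$ is strictly decreasing in $t$ (this is the standard "three-chord" / slope-monotonicity property of concave functions), hence its negative is strictly increasing. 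One must handle the two-sided nature carefully — the points $t < 1$ and $t > 1$ both compare to the anchor $1$ — but slope monotonicity of a concave function holds globally: for $s < t$, the secant slope over $[1,s]$-type intervals is monotone regardless of the relative position of $1$, $s$, $t$, as long as one uses the correct formulation (the map $a \mapsto \frac{\Psi(a) - \Psi(1)}{a-1}$ is nonincreasing on $(0,\infty)\setminus\{1\}$, strictly so under strict concavity). I should also verify $\Psi(1) = 0$: indeed $\Psi(1) = \log(1 \cdot \sum_i x_i) = \log 1 = 0$ since $x$ is a probability mass function.

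The remaining loose ends are the boundary/limiting cases. The statement is for $p > q > 0$ with $p,q$ finite, and $H_1$ is defined separately; I would check that $\phi$ extends continuously to $t = 1$ with $\phi(1) = -\Psi'(1)$, and that this matches $H_1(X) - \lim_{t\to 1}\frac{\log t}{t-1} = H_1(X) - 1$; since $\Psi'(1) = \Psi'(1)$ and from the varentropy proof $\Psi'(t) = \frac{\sum \log f(n) f^t(n)}{\sum f^t(n)} + \frac 1t$, at $t=1$ this is $-H_1(X) + 1$, so $\phi(1) = H_1(X) - 1$, consistent. Hence strict monotonicity of $\phi$ across $t=1$ follows from strict concavity of $\Psi$ together with differentiability, with no special casing needed.

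The main obstacle is essentially cosmetic rather than mathematical: making sure the secant-slope monotonicity is invoked in a form valid for anchor point $1$ sitting on either side of the interval $[q,p]$, and confirming that strict concavity (not just concavity) yields the strict inequality — both handled by the elementary lemma that for strictly concave $\Psi$, $a \mapsto \frac{\Psi(a)-\Psi(1)}{a-1}$ is strictly decreasing on each of $(0,1)$ and $(1,\infty)$ and the left limit at $1$ dominates the right limit (with value $\Psi'(1)$ in between). Once that is in place the theorem is immediate. I would present the argument in this order: (i) compute $\Psi(1) = 0$ and rewrite $H_p$ via $\Psi$; (ii) reduce the claimed inequality to strict monotonicity of $\Psi(t)/(1-t)$; (iii) invoke secant-slope monotonicity of the strictly concave $\Psi$; (iv) note the sharpness is witnessed by the geometric distribution, for which $t\sum_i x_i^t$ is (up to the normalization) the function making $\Psi$ affine-ish and the inequality tight in the limit.
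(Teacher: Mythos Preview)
Your proposal is correct and takes essentially the same approach as the paper: both anchor the strict concavity of $\Psi(t)=\log(t\sum_i x_i^t)$ at $t=1$ (where $\Psi(1)=0$) and read off the inequality, the paper by writing $q$ as a convex combination of $p$ and $1$ and applying the definition of concavity directly, you by the equivalent secant-slope monotonicity $t\mapsto \Psi(t)/(1-t)$. Your packaging has the minor cosmetic advantage of handling all positions of $p,q$ relative to $1$ uniformly, whereas the paper treats $p>q>1$ and remarks that the other cases are similar.
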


\begin{proof}
Let $x_i = \mathbb{P}(X = i)$.
We prove the case $p > q > 1$, the other cases can be treated similarly.  Letting $\lambda = \frac{q-1}{p-1}$, $q = \lambda p + (1-\lambda) 1$, so that by strict concavity,
\begin{align*}
    \log \left( q \sum_i x_i^q \right) > \lambda \log \left( p \sum_i x_i^p \right) + \log \left( 1 \sum_i x_i^1 \right),
\end{align*}
and our result follows for $p,q \notin \{1,\infty\}$ from this inequality.  Owing to log-concavity there in no difficulty obtaining the limiting cases through continuity.
\end{proof}

Note that when $X_\lambda$ has geometric distribution $\mathbb{P}(X = n) = (1-\lambda) \lambda^n$ for parameter $\lambda \in (0,1)$, its R\'enyi entropy can be computed directly,
$$H_p(X_\lambda) =  \log \left( \frac{(1-\lambda)^p}{1-\lambda^p} \right)^{\frac 1 {1-p}}.$$
Hence,
\begin{align*}
    H_p(X_\lambda) - H_q(X_\lambda)
        &=
            \log \left( \frac{(1-\lambda)^p}{1-\lambda^p} \right)^{\frac 1 {1-p}}
                - 
                \log \left( \frac{(1-\lambda)^q}{1-\lambda^q} \right)^{\frac 1 {1-q}}
                    \\
        &=
            \log \left(\frac{1-\lambda}{1-\lambda^p}\right)^{\frac{1}{1-p}} - \log \left(\frac{1-\lambda}{1-\lambda^q}\right)^{\frac{1}{1-q}},
\end{align*}
which tends to $\log \left( \frac{p^{\frac{1}{p-1}}}{q^{\frac{1}{q-1}}} \right)$ with $\lambda \to 1$.  Thus we see that Theorem \ref{thm: Renyi entropy reversal} is sharp.

The following result is actually a consequence of the R\'enyi entropy comparison derived in \cite{melbourne2020reversals}.  It does not need the assumption of monotonicity.  The result should be compared to the classical result of Darroch \cite{darroch1964distribution}, that states that for independent sums of Bernoulli random variables the distance between the mean and mode is no greater than 1, see also \cite{pitman1997probabilistic, tang2019poisson, madiman2021bernoulli} for background and recent developments on such variables.  For the larger class of log-concave variables such a result is impossible.  For example, a geometric distribution has mode at $0$, but can have arbitrarily large expectation.  However, the result below demonstrates that the value of any log-concave distribution 
at its mean approximates up to an absolute constant $e$, the value of the distribution at its mode.

\begin{coro} \label{cor: value at mean looks like mode}
For $X$ with log-concave density function $f$ with support $A \subseteq \mathbb{Z}$,
\begin{align*}
     \max \{ f ( \lfloor \mathbb{E}X \rfloor ), f( \lceil \mathbb{E} X \rceil ) \} \geq e^{-1} \ \|f\|_\infty
\end{align*}
where $\lfloor \cdot \rfloor$ and $\lceil \cdot \rceil$ denote the usual floor and ceiling.
\end{coro}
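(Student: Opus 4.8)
The plan is to translate the statement into entropy language and then combine a Jensen-type bound with the sharp comparison between Shannon and min-entropy for log-concave variables from \cite{melbourne2020reversals}. Concretely, I will show that $\min\{I(\lfloor\mathbb{E}X\rfloor),I(\lceil\mathbb{E}X\rceil)\}\le H(f)$, and then invoke $H(f)\le H_\infty(X)+1$; since $\|f\|_\infty=e^{-H_\infty(X)}$, exponentiating the negatives yields exactly $\max\{f(\lfloor\mathbb{E}X\rfloor),f(\lceil\mathbb{E}X\rceil)\}=e^{-\min\{I(\lfloor\mathbb{E}X\rfloor),I(\lceil\mathbb{E}X\rceil)\}}\ge e^{-H(f)}\ge e^{-1}\|f\|_\infty$.

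For the first bound the key point is that log-concavity of $f$ is precisely convexity of the information content $I=-\log f$. The support $A$ of $f$ is an interval of consecutive integers, and $x_i^2\ge x_{i-1}x_{i+1}$ reads $I(i)\le\tfrac12\bigl(I(i-1)+I(i+1)\bigr)$, so the restriction of $I$ to $A$ extends to a proper, nonnegative convex function $\tilde I\colon\mathbb{R}\to(-\infty,+\infty]$, namely the piecewise-linear interpolant on $\conv(A)$ (and $+\infty$ off it), which agrees with $I$ at every integer. Since $\mathbb{E}X\in\conv(A)$, Jensen's inequality gives $\tilde I(\mathbb{E}X)\le\mathbb{E}[\tilde I(X)]=\mathbb{E}[I(X)]=H(f)$, the middle equality holding because $X$ is integer valued. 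On the other hand $\tilde I(\mathbb{E}X)$ is, by construction, a convex combination of $I(\lfloor\mathbb{E}X\rfloor)$ and $I(\lceil\mathbb{E}X\rceil)$, hence at least their minimum. Chaining the two inequalities gives $\min\{I(\lfloor\mathbb{E}X\rfloor),I(\lceil\mathbb{E}X\rceil)\}\le H(f)$, equivalently $\max\{f(\lfloor\mathbb{E}X\rfloor),f(\lceil\mathbb{E}X\rceil)\}\ge e^{-H(f)}$.

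The second bound, $H(f)\le H_\infty(X)+1$, is the R\'enyi-entropy comparison for (not necessarily monotone) log-concave variables established in \cite{melbourne2020reversals}; in the monotone case it is also recovered from Theorem~\ref{thm: Renyi entropy reversal} by sending $p\to\infty$ and $q\to1$, using $q^{1/(q-1)}\to e$ and $p^{1/(p-1)}\to1$. It is sharp, saturated by the geometric law as $\lambda\to1$, which is the provenance of the constant $e$ in the corollary. The main obstacle is really this external sharp comparison; once it is granted, the present argument is the one-line Jensen computation above. The feature worth flagging is that it is log-concavity --- not monotonicity --- that is essential here, precisely because it is what forces $-\log f$ to be convex: for a general density the function $\tilde I$ need not exist and the conclusion can fail.
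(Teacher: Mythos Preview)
Your proof is correct and is essentially the paper's own argument: the paper constructs the log-affine interpolant $\tilde f$ (your $e^{-\tilde I}$), applies Jensen to get $H(f)\ge -\log\tilde f(\mathbb{E}X)$, invokes $H(f)\le H_\infty(f)+1$ from \cite{melbourne2020reversals}, and then uses that $\tilde f(\mathbb{E}X)$ is a weighted geometric mean of the two endpoint values, hence $\le$ their max. The only difference is cosmetic---you phrase everything via $I=-\log f$ rather than $\tilde f$.
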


Note that inequality is sharp in the sense that the constant $e^{-1}$ cannot be improved, as can be seen by choosing a geometric distribution with large, integer valued mean.

\begin{proof}
Define the log-affine interpolation of $f$,
\begin{align*}
    \tilde{f}(x) = \begin{cases}
        f^{1 - (x - \lfloor x \rfloor)}(\lfloor x \rfloor) f^{x - \lfloor x \rfloor}( \lceil x \rceil)\   &\hbox{ for } x \in co(A),
            \\
            0 &\hbox{ otherwise.}
    \end{cases}
\end{align*}
Then $\log \tilde{f}$ is a concave function on $co(A)$ the convex hull of $A$, and by Jensen's inequality and 
\begin{align*}
   H(f) = - \mathbb{E} \log f(X)  =- \mathbb{E} \log \tilde{f}(X)  \geq - \log \tilde{f}(\mathbb{E} X).
\end{align*}
Using that by Theorem \ref{thm: Renyi entropy reversal} and by Theorem 1.3 of \cite{melbourne2020reversals} in the absence of monotonicity, $H(f) \leq H_\infty(f) + 1$, and inserting the inequality into exponentials we have
\begin{align*}
    \exp( - \log \|f\|_\infty + 1 ) &\geq \exp( - \log \tilde{f}( \mathbb{E} X))
        \\
    \frac{e}{\|f\|_\infty} &\geq \frac 1 {\tilde{f}(\mathbb{E} X)}
\end{align*}
which yields
 \begin{align*}
     e \max \{ f(\lfloor \mathbb{E}X \rfloor), f( \lceil \mathbb{E} X \rceil) \} \geq  f^{1 - (\mathbb{E}X - \lfloor \mathbb{E}X \rfloor)}(\lfloor \mathbb{E}X \rfloor) f^{\mathbb{E}X - \lfloor \mathbb{E}X \rfloor}( \lceil \mathbb{E}X \rceil) \geq \| f\|_\infty
 \end{align*}
\end{proof}

\subsection{Concentration of information content}

\begin{thm} \label{thm: concentration of information}
For $X \sim f$ monotone log-concave variable on $\mathbb{Z}$, for $t >0$
\begin{align*}
    \mathbb{P}( I(X)  \geq H(f) + t ) &\leq (1+t) e^{-t} ,
\end{align*}
and when $t \leq 1$,
\begin{align*}
    \mathbb{P}( I(X) \leq H(f) - t ) \leq (1-t) e^t.
\end{align*}
\end{thm}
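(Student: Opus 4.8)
The plan is to run the Cram\'er--Chernoff method, converting the concavity of $\Psi(t) := \log\bigl(t\sum_n f^t(n)\bigr)$ furnished by Theorem \ref{thm: conjecture} into a bound on the logarithmic moment generating function of $I(X)$. Write $I = I_X$ and $H = H(f)$. The point of departure is the elementary identity
\[
\mathbb{E}\, e^{s I(X)} = \mathbb{E}\, f(X)^{-s} = \sum_n f^{1-s}(n) = \frac{e^{\Psi(1-s)}}{1-s}, \qquad s < 1,
\]
the sum being finite for every such $s$ because a log-concave $\ell_1$ sequence belongs to $\ell_p$ for all $p \in (0,\infty)$. Since $\Psi(1) = \log 1 = 0$ and $\Psi'(1) = 1 + \sum_n f(n)\log f(n) = 1 - H$, concavity of $\Psi$ yields the tangent-line inequality $\Psi(u) \le (1-H)(u-1)$ valid for all $u > 0$.

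Inserting $u = 1-s$ with $s \in (0,1)$ gives
\[
\mathbb{E}\, e^{s(I(X) - H)} = e^{-sH}\,\frac{e^{\Psi(1-s)}}{1-s} \le \frac{e^{-s}}{1-s},
\]
and inserting $u = 1+s$ with $s > 0$ gives, after the identical manipulation, $\mathbb{E}\, e^{-s(I(X)-H)} \le e^{s}/(1+s)$. Markov's inequality then produces $\mathbb{P}(I(X) - H \ge t) \le e^{-st}\,\mathbb{E}\,e^{s(I(X)-H)} \le e^{-s(1+t)}/(1-s)$; taking $s = t/(1+t) \in (0,1)$ makes $s(1+t) = t$ and $1-s = 1/(1+t)$, which is exactly the first claimed bound $(1+t)e^{-t}$. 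Symmetrically $\mathbb{P}(I(X) - H \le -t) \le e^{-st}\,\mathbb{E}\,e^{-s(I(X)-H)} \le e^{s(1-t)}/(1+s)$ for $s > 0$; for $t \in (0,1)$ the choice $s = t/(1-t)$ makes $s(1-t) = t$ and $1+s = 1/(1-t)$, giving $(1-t)e^t$, while at $t = 1$ the right-hand side tends to $0$ as $s \to \infty$, so $\mathbb{P}(I(X) \le H - 1) = 0$.

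Once Theorem \ref{thm: conjecture} is granted, there is no genuine obstacle: this is the textbook exponential-moment recipe, and the only items deserving a remark are the finiteness of the one-sided Laplace transforms (handed to us by the $\ell_1 \Rightarrow \ell_p$ property of log-concave sequences, which also legitimizes $s \to \infty$ in the lower tail) and the degenerate endpoint $t = 1$. One could alternatively carry out the Chernoff optimization $\inf_s$ explicitly via calculus, arriving at the same minimizers $s = t/(1+t)$ and $s = t/(1-t)$; I have merely recorded the optimal $s$ directly.
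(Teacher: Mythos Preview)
Your proof is correct and is, at its core, the same Cram\'er--Chernoff argument as the paper's, landing on the identical moment-generating-function bound and the same optimal choices of $s$. The packaging differs: the paper first extracts from Theorem~\ref{thm: conjecture} the varentropy bound $V(X_\alpha)\le 1$ for every tilt $X_\alpha\sim f^\alpha/\sum_n f^\alpha(n)$ (noting that $X_\alpha$ is again monotone log-concave), and then feeds this into the Fradelizi--Madiman--Wang lemma, whose proof integrates $F''(\alpha)\le 1/\alpha^2$ back up to the bound $F(\alpha)\le -H(\alpha-1)+(\alpha-1-\log\alpha)$ before applying Markov and optimizing. Your route uses the concavity of $\Psi$ directly via the tangent line at $1$, which is exactly the same inequality (since $\Psi(\alpha)=\log\alpha+F(\alpha)$) without the differentiate-then-reintegrate detour; you thereby avoid having to observe separately that the tilts $X_\alpha$ inherit the hypotheses. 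The paper's detour buys modularity---the varentropy bound and the FMW lemma are each of independent interest---while your version is the cleaner path to this particular theorem.
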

Note that when $t = 1$, we obtain $\mathbb{P}( I(X) \leq H(X) - 1) = 0$, implying that $- \log \| f\|_\infty = H_\infty(f) > H(f) -1$ recovers the sharp comparison of min-entropy and Shannon entropy above.  The inequality $H_\infty(X) \geq H(X) - 1$ holds without the monotonicity assumption, see \cite{melbourne2020reversals}. \\

The following is a general and elementary technique for deriving concentration of the information content based on uniform bounds on the varentropy of the ``canonical ensemble''.  In \cite{FMW16}, it is assumed that $X$ takes values in $\mathbb{R}^d$, and has a density with respect to the Lebesgue measure.  We include the proof adapted from \cite{FMW16} below, for the convenience of the reader. 

\begin{lem}[Fradelizi-Madiman-Wang \cite{FMW16}] \label{lem: FMW}
    For a random variable $X$ on $E$ with density $f \in L^\alpha(\mu)$ for all $\alpha >0$, and $X_\alpha \sim \frac{f^\alpha}{\int f^\alpha d\mu}$ satisfying $V(X_\alpha) \leq K$, then for $t > 0$
    \begin{align} \label{eq: upper tail}
        \mathbb{P}( I(X) - H_{(\mu)} (f) \geq t) \leq e^{-K r(t/K)}
    \end{align}
    and 
    \begin{align} \label{eq: lower tail}
        \mathbb{P}( I(X) - H_{(\mu)} (f) \leq - t) \leq e^{-K r(- t /K)}
    \end{align}
    where $r(t) = t - \log( 1+t)$ for $t \geq -1$ and is infinite otherwise.
\end{lem}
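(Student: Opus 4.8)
The plan is to follow the standard Cram\'er--Chernoff exponential moment argument, adapted to the information content. Write $g_\alpha = f^\alpha / \int f^\alpha \, d\mu$ for the density of $X_\alpha$ with respect to $\mu$, and set $\Lambda(\alpha) \coloneqq \log \int f^\alpha \, d\mu$, the log-partition function (log-Laplace transform of $I(X)$ under $\mu$, up to sign). First I would record the key computational identities: differentiating under the integral sign, $\Lambda'(\alpha) = \int (\log f) g_\alpha \, d\mu = -\,\mathbb{E}\, I(X_\alpha)$ and $\Lambda''(\alpha) = \int (\log f)^2 g_\alpha \, d\mu - \left( \int (\log f) g_\alpha \, d\mu \right)^2 = V(X_\alpha)$. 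In particular $\Lambda$ is convex, $\Lambda(1) = 0$, and $\Lambda'(1) = -H_{(\mu)}(f)$. The hypothesis $V(X_\alpha) \leq K$ for all $\alpha > 0$ becomes the bound $\Lambda''(\alpha) \leq K$.

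Next I would set up the exponential Markov inequality. For the upper tail and a parameter $\alpha < 1$ (so that $1 - \alpha > 0$),
\begin{align*}
    \mathbb{P}( I(X) - H_{(\mu)}(f) \geq t )
    &= \mathbb{P}\left( e^{(1-\alpha)(I(X) - H_{(\mu)}(f))} \geq e^{(1-\alpha)t} \right) \\
    &\leq e^{-(1-\alpha)t} \, \mathbb{E} \, e^{(1-\alpha)(I(X) - H_{(\mu)}(f))}.
\end{align*}
Since $I(X) = -\log f(X)$, we have $\mathbb{E}\, e^{(1-\alpha) I(X)} = \int f^{-(1-\alpha)} f \, d\mu = \int f^{\alpha} \, d\mu = e^{\Lambda(\alpha)}$, so the bound reads $\exp\left( -(1-\alpha)t - (1-\alpha) H_{(\mu)}(f) + \Lambda(\alpha) \right)$. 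Now I would Taylor-expand $\Lambda$ about $1$: by the integral form of Taylor's theorem, $\Lambda(\alpha) = \Lambda(1) + \Lambda'(1)(\alpha - 1) + \int_\alpha^1 (s - \alpha) \Lambda''(s) \, ds \leq -H_{(\mu)}(f)(\alpha-1) + \tfrac{K}{2}(1-\alpha)^2$, using $\Lambda'' \leq K$. Substituting and writing $\beta = 1 - \alpha > 0$, the exponent is bounded by $-\beta t + \tfrac{K}{2}\beta^2$, so $\mathbb{P}(\cdots \geq t) \leq \exp\left( \min_{\beta > 0}\left( \tfrac{K}{2}\beta^2 - \beta t \right) \right) = e^{-t^2/(2K)}$.

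Here I notice a discrepancy: the quadratic bound $\Lambda'' \leq K$ only yields the sub-Gaussian rate $e^{-t^2/(2K)}$, whereas the claimed rate $e^{-K r(t/K)}$ with $r(t) = t - \log(1+t)$ is a sharper, Poisson-type tail. To get the stated (stronger) bound one must \emph{not} Taylor-expand but instead keep $\Lambda$ exact, optimizing $\beta \mapsto -\beta t - \beta H_{(\mu)}(f) + \Lambda(1-\beta)$ directly — this is essentially a Legendre transform computation, and the constraint $\Lambda'' \leq K$ must be used more globally, e.g. by comparing $\Lambda$ to the log-partition function of a Poisson/geometric reference for which equality $\Lambda'' = K$ holds. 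Concretely, the cleanest route is: let $\phi(\alpha) \coloneqq \Lambda(\alpha) + (1-\alpha)H_{(\mu)}(f)$, so $\phi(1) = 0$, $\phi'(1) = 0$, $\phi'' = \Lambda'' \leq K$; one shows $\phi(\alpha) \leq \Phi_K(\alpha)$ where $\Phi_K$ is the corresponding function for a variable whose canonical ensemble saturates the varentropy bound, and then computes the Legendre transform of $\Phi_K$ explicitly to be $K r(\cdot / K)$. For the lower tail one runs the same argument with $\alpha > 1$, i.e. $\beta = 1 - \alpha < 0$, and the constraint $t \leq K$ (equivalently $\alpha$ staying in the admissible range) appears because $r(s)$ is only finite for $s \geq -1$; the optimizing $\alpha^*$ solves $\Lambda'(\alpha^*) = -(H_{(\mu)}(f) - t)$ and exists precisely when $t \leq K$. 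I expect the main obstacle to be exactly this: obtaining the exact rate function $r$ rather than its Gaussian surrogate, which requires a convexity/comparison argument on $\Lambda$ at all scales rather than a naive second-order Taylor estimate, and handling the endpoint of the admissible parameter range for the lower tail cleanly.
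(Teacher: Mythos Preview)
Your overall Cram\'er--Chernoff strategy is exactly the one the paper uses, but there is a single computational error that derails the rest of your analysis. You write $\Lambda''(\alpha) = V(X_\alpha)$, but this is not correct: $\Lambda''(\alpha)$ is the variance of $\log f$ under $X_\alpha$, whereas by definition $V(X_\alpha)$ is the variance of $\log g_\alpha$ under $X_\alpha$. Since $\log g_\alpha = \alpha \log f - \Lambda(\alpha)$, the two differ by a factor of $\alpha^2$:
\[
    V(X_\alpha) \;=\; \alpha^2\, \Lambda''(\alpha).
\]
Hence the hypothesis $V(X_\alpha) \leq K$ translates to $\Lambda''(\alpha) \leq K/\alpha^2$, not $\Lambda'' \leq K$.

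With this correction, your Taylor expansion around $\alpha = 1$ already gives the claimed Poisson-type rate. Indeed,
\[
    \int_1^\alpha (\alpha - s)\,\Lambda''(s)\,ds \;\leq\; K \int_1^\alpha \frac{\alpha - s}{s^2}\,ds \;=\; K\bigl(\alpha - 1 - \log \alpha\bigr) \;=\; K\, r(\alpha - 1),
\]
so that, writing $\alpha = 1 - \beta$, one obtains $\mathbb{E}\bigl[e^{\beta(I(X) - H_{(\mu)}(f))}\bigr] \leq e^{K r(-\beta)}$. Applying Markov and optimizing $\beta \mapsto K r(-\beta) - \beta t$ (resp.\ $K r(\beta) - \beta t$) then yields the bounds \eqref{eq: upper tail} and \eqref{eq: lower tail} directly. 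No ``global'' comparison to a reference model or explicit Legendre duality beyond this is needed; the function $r$ appears simply because $\int K/s^2$ produces a logarithm rather than a quadratic. This is precisely the route taken in the paper.
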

The proof is a combination of results from \cite{FMW16}, Theorem 3.1 and Corollary 3.4 in particular.
% The technique is standard, and sometimes referred to as the Cramer-Chernoff method. One derives tail bounds for a variable $X$ through optimizing in $t$ a bound obtained from a combination of Markov's inequality applied to $e^{tX}$  and control of the moment generating function of the variable.  The varentropy bounds are used to control the moment generating function.

\begin{proof}
Observe that the function $F(\alpha) = \log \int f^\alpha(x) d\mu(x)$ is infinitely differentiable\footnote{Indeed, the $n$-th derivative of $\alpha \mapsto f^\alpha(x)$, $f^\alpha (\log f)^n$ is measurable as the composition of a measurable function $f$, with a continuous function $x^\alpha (\log x)^n$, and that further, and $|f^{\alpha'} (\log f)^n| \leq \mathbbm{1}_{\{f > 1 \}}f^{\alpha + \varepsilon} C(n,\varepsilon) + \mathbbm{1}_{\{f < 1 \}}f^{\alpha - \varepsilon} c(n,\varepsilon) $ for $\alpha' \in (\alpha - \varepsilon/2, \alpha + \varepsilon/2)$, where $C$ and $c$ are uniform bounds on $(\log x)^n/x^{\varepsilon/2}$ for $x \geq 1$ and $x^{\varepsilon/2} |\log x |^n$ for $x \leq 1$ respectively, so that the requisite domination exists for Lebesgue dominated convergence to pass the derivative and integrals.}
\begin{align*}
    K = \sup_{\alpha > 0} V(X_\alpha) = \sup_{\alpha > 0} \alpha^2 {F''(\alpha)}.
\end{align*}
By applying $F''(t) \leq  K/t^2$ to the Taylor expansion,
\begin{align*}
    F(\alpha) = F(1) + (\alpha -1) F'(1) + \int_1^\alpha (\alpha- t) F''(t) dt
\end{align*}
yields
\begin{align} \label{eq: F alpha bound}
    F(\alpha) = F(1) + (\alpha - 1) F'(1) + K(\alpha - 1 - \log \alpha ).
\end{align}
With the substitution $\alpha = 1 - \beta$, and the insertion of $F(1) = 0$, and $F'(1) = - H_{(\mu)}(X)$ we can rewrite \eqref{eq: F alpha bound} as
\begin{align}
    \mathbb{E} \left( e^{\beta (I(X) - H_{(\mu)}(f))} \right) \leq e^{K r(-\beta)}.
\end{align}
For $\beta,t >0$, taking exponentials and applying Markov's inequality,
\begin{align*}
    \mathbb{P}( I(X) - H_{(\mu)}(f) \leq -t) 
        &\leq
            \mathbb{E}\left[ e^{- \beta (I(X) - H_{(\mu)}(f))} \right] e^{- \beta t}
                \\
        &\leq 
            e^{ K(r(\beta) - \frac{\beta t}{K})}
\end{align*}
Standard calculus allows minimization over $\beta$ and yields, $\inf_\beta r(\beta) - \frac{\beta t}{K} = - r(-t/K)$ which gives \eqref{eq: lower tail}.
Applying the same ideas yields \eqref{eq: upper tail} as well.
\end{proof}

\begin{proof}[Proof of Theorem \ref{thm: concentration of information}]
    If $X \sim f$, is log-concave and monotone, then $X_\alpha \sim f_\alpha \coloneqq f^\alpha/\sum_n f^\alpha(n)$ is as well.  Hence by Theorem \ref{thm: varentropy bounds}, $V(X_\alpha) \leq 1$. Applying Lemma \ref{lem: FMW} with $K =1$ yields the result.
\end{proof}

\subsection{Renyi Entropy Power Reversals}
The entropy power inequality, is a fundamental inequality in information theory that gives a sharp lower bound on the amount of entropy increase in summation of continuous independent variables, explicitly taking $\mu$ to be the Lebesgue measure on $\mathbb{R}^d$, and denoting for $X$ with density $f$ with respect to $\mu$ $N(X) = e^{\frac 2 d H_{(\mu)}(f)} $, Shannon's entropy power inequality states that
$N(X+Y) \geq N(X) + N(Y)$ for independent random vectors $X$ and $Y$.  More generally, super-additivity properties of the R\'enyi entropy have been studied, extending the Shannon's EPI, see \cite{BC15:1, RS16, MMX17:1, li2018renyi, li2020further, marsiglietti2018entropy, rioul2018renyi}.   We consider a R\'enyi Entropy Power reversal to be any non-trivial upper bound on the entropy of a sum of random variables, see \cite{BM12:jfa, BM13:goetze, CZ94, Yu08:2, BNT15, XMM16:isit, bobkov2020concentration}.  
\begin{thm} \label{thm: Reverse Renyi EPI}
For $X,Y$ iid, log-concave, and monotone on $\mathbb{Z}$, and $\alpha \in [2,\infty]$
\begin{align*}
    H_\alpha(X -Y) \leq H_\alpha(X) + \log 2.
\end{align*}
\end{thm}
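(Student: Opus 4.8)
The starting point is that $X-Y$ is itself log-concave: it is the sum of the two independent $\mathbb{Z}$-valued log-concave variables $X$ and $-Y$, and a sum of independent discrete log-concave variables is log-concave. It is moreover symmetric, with density $g_k := \mathbb{P}(X-Y = k) = \sum_j x_j x_{j+k}$, where $x_i := \mathbb{P}(X = i)$. A Cauchy--Schwarz estimate gives $g_k \le \bigl(\sum_j x_j^2\bigr)^{1/2}\bigl(\sum_j x_{j+k}^2\bigr)^{1/2} = \sum_j x_j^2 = g_0$, so that $\|g\|_\infty = g_0 = \|x\|_2^2$ and hence
\begin{align*}
    H_\infty(X-Y) = -\log\|g\|_\infty = -\log\|x\|_2^2 = H_2(X).
\end{align*}
This identity is the hinge of the argument.

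The plan is then to sandwich $H_\alpha(X-Y)$ between two bounds whose error terms cancel. For the upper bound I would use that $X-Y$ is log-concave and invoke the monotonicity-free $\infty$-R\'enyi comparison of \cite{melbourne2020reversals} --- whose $\alpha = 1$ instance $H(f) \le H_\infty(f) + 1$ appears in the Introduction --- in the form $H_\alpha(Z) \le H_\infty(Z) + \frac{\log\alpha}{\alpha-1}$, valid for every log-concave $Z$ (the constant being $\log 2$ when $\alpha = 2$ and $0$ when $\alpha = \infty$). Applied to $Z = X-Y$ and combined with the hinge identity,
\begin{align*}
    H_\alpha(X-Y) \le H_\infty(X-Y) + \frac{\log\alpha}{\alpha-1} = H_2(X) + \frac{\log\alpha}{\alpha-1}.
\end{align*}
For the lower bound I would apply Theorem~\ref{thm: Renyi entropy reversal} to the monotone log-concave variable $X$ with $p = \alpha$, $q = 2$ (for $\alpha > 2$; the values $\alpha = 2$ and $\alpha = \infty$ are the trivial equality and the $p \to \infty$ limit), which gives
\begin{align*}
    H_\alpha(X) > H_2(X) + \log\frac{\alpha^{1/(\alpha-1)}}{2} = H_2(X) - \frac{\log\alpha}{\alpha-1} + \log 2,
\end{align*}
i.e.\ $H_2(X) < H_\alpha(X) + \log 2 - \frac{\log\alpha}{\alpha-1}$. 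Adding the two displays cancels the $\frac{\log\alpha}{\alpha-1}$ terms and yields $H_\alpha(X-Y) < H_\alpha(X) + \log 2$ for $\alpha \in (2,\infty)$; the endpoint $\alpha = 2$ is already contained in the first display, and $\alpha = \infty$ follows from the hinge identity together with the $p \to \infty$ case of Theorem~\ref{thm: Renyi entropy reversal}.

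The single ingredient I have not reproved is the comparison $H_\alpha(Z) \le H_\infty(Z) + \frac{\log\alpha}{\alpha-1}$ for an \emph{arbitrary} log-concave $Z$, and this is where I expect the real difficulty to sit: the reversals proved in this paper (Theorems~\ref{thm: conjecture} and~\ref{thm: Renyi entropy reversal}) require monotonicity, which $X-Y$ does not have, so one genuinely needs the weaker monotonicity-free estimate. If one instead wants a self-contained argument, I would reduce the claim, for finite $\alpha$, to the inequality $\sum_k (2g_k)^\alpha \ge 2\sum_i x_i^\alpha$, and prove this by establishing the majorization $2g \succ x \sqcup x$, i.e.\ that $2g$ majorizes two copies of $x$; the majorization inequality (Appendix~\ref{sec: Appendix}) applied with $t \mapsto t^\alpha$, $\alpha \ge 1$, then finishes it. By the layer-cake formula this majorization is equivalent to the level-set bound $\#\{k : g_k > s\} \ge \#\{i : x_i > 2s\}$ for all $s > 0$, i.e.\ (taking $x$ decreasing on $\{0,1,\dots\}$) to $g_{\lceil N/2 \rceil} \ge \frac{1}{2} x_N$ for every $N$ in the support, the case $N = 0$ being exactly the $(\infty, 2)$ instance of the R\'enyi reversal. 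The obstacle in this route is that no single term of the autocorrelation $g_h = \sum_j x_j x_{j+h}$ can carry the estimate --- in the extremal geometric case $g_h = \frac{1-\lambda}{1+\lambda}\lambda^{|h|}$ all scales contribute comparably --- so log-concavity must be used in full, presumably through the distributional-majorization comparison between a monotone log-concave sequence and its geometric counterpart that drives the proof of Theorem~\ref{thm: conjecture}.
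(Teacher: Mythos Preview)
Your proof is correct and follows essentially the same route as the paper's: the hinge identity $H_\infty(X-Y)=H_2(X)$, then the monotonicity-free $\infty$-R\'enyi comparison from \cite{melbourne2020reversals} applied to the (symmetric, hence non-monotone) log-concave variable $X-Y$, and finally Theorem~\ref{thm: Renyi entropy reversal} applied to the monotone $X$ with $(p,q)=(\alpha,2)$ so that the $\frac{\log\alpha}{\alpha-1}$ terms cancel. The only cosmetic difference is that the paper obtains $\|g\|_\infty=g_0$ via the rearrangement inequality $\sum_k x_k y_k\le\sum_k x^\downarrow_k y^\downarrow_k$, whereas you use Cauchy--Schwarz; both are fine. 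Your identification of the one external input---the $\infty$-R\'enyi reversal for general (non-monotone) log-concave sequences---is exactly the dependence the paper has as well.
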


The inequality is a sharp improvement for monotone log-concave variables of Theorem 6.2 of \cite{melbourne2020reversals}, where it is proven that $H_\alpha(X-Y) \leq H_\alpha(X) + \alpha^{\frac{1}{\alpha-1}} \log 2 $ for $X$ and $Y$ iid and log-concave.  To see that the constant $2$ cannot be improved, take $X$ to have density $f(n) = (1-p) p^n$ so that for $n \geq 0$, $f_{X-Y}(n) = \frac{1-p}{1+p} p^{|n|}$. Taking the limit with $p \to 1$ shows the inequality to be sharp.  An alternative motivation for the inequality is its relationship to an entropic generalization conjectured by Madiman and Kontoyannis \cite{madiman2016entropy} of the Rogers-Shephard inequality from convex geometry \cite{RS58:1}, see also \cite{melbourne2020reversals} for further discussion.

 The proof relies on an elementary trick, known to specialists, that $H_2(X) = H_\infty(X-Y)$ holds for iid variables $X$ and $Y$.  We include a proof for completeness and emphasize that this equality is independent of any property of the distribution\footnote{The proof is given for iid log-concave $X$ and $Y$ in \cite{melbourne2020reversals}.}.

 \begin{lem}
     For $X$ and $Y$ iid on $\mathbb{Z}$,
     \begin{align}
         H_2(X) = H_\infty(X-Y).
     \end{align}
 \end{lem}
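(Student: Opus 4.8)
The plan is to compute both sides directly from the definitions and observe they coincide. First I would write out the right-hand side: by definition $H_\infty(X-Y) = -\log \|g\|_\infty$ where $g(n) = \mathbb{P}(X - Y = n)$ is the density of the difference. Since $X$ and $Y$ are independent,
\begin{align*}
    g(n) = \mathbb{P}(X - Y = n) = \sum_{k \in \mathbb{Z}} \mathbb{P}(X = n+k) \mathbb{P}(Y = k) = \sum_{k} x_{n+k} x_k,
\end{align*}
using that $X, Y$ are identically distributed with $x_i = \mathbb{P}(X=i) = \mathbb{P}(Y=i)$.

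Next I would identify $\|g\|_\infty = \max_n g(n)$. The key observation is that $g(n)$ is a correlation of the sequence $x$ with a shift of itself, and by the Cauchy–Schwarz inequality $\sum_k x_{n+k} x_k \leq \big(\sum_k x_{n+k}^2\big)^{1/2}\big(\sum_k x_k^2\big)^{1/2} = \sum_k x_k^2$, with equality at $n = 0$. Hence $\|g\|_\infty = g(0) = \sum_k x_k^2 = \|x\|_2^2$. On the other side, $H_2(X) = \frac{\log\left(\sum_i x_i^2\right)}{1-2} = -\log\left(\sum_i x_i^2\right) = -\log\|x\|_2^2$. Comparing, $H_2(X) = -\log\|x\|_2^2 = -\log\|g\|_\infty = H_\infty(X-Y)$, which is the claim.

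There is essentially no obstacle here; the only point requiring a line of care is justifying $\|g\|_\infty = g(0)$, which is immediate from Cauchy–Schwarz as above (alternatively one notes $\sum_k (x_{n+k} - x_k)^2 \geq 0$ expands to $2\sum_k x_k^2 \geq 2 \sum_k x_{n+k} x_k$ after using $\sum_k x_{n+k}^2 = \sum_k x_k^2$). I would emphasize, as the surrounding text does, that this uses no structural property of the distribution beyond $X$ and $Y$ being i.i.d. and square-summable (which holds automatically for log-concave $\ell_1$ sequences). The same computation works verbatim on any discrete abelian group, but I would state it only for $\mathbb{Z}$ as needed.
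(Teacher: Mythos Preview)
Your proof is correct and follows the same overall structure as the paper: compute $g(n)=\sum_k x_{n+k}x_k$, show that the maximum is attained at $n=0$, and identify $g(0)=\sum_k x_k^2$ with $e^{-H_2(X)}$. The only difference is the tool used for the maximization step. The paper invokes the elementary rearrangement inequality $\sum_i x_i y_i \leq \sum_i x_i^\downarrow y_i^\downarrow$ applied to $y=\tau_n x$ (a shift of $x$), noting that $(\tau_n x)^\downarrow = x^\downarrow$; you instead use Cauchy--Schwarz together with the shift-invariance $\sum_k x_{n+k}^2=\sum_k x_k^2$. Both arguments are short and standard; your Cauchy--Schwarz route is arguably more self-contained since it avoids introducing rearrangements, while the paper's route has the mild advantage of not needing square-summability as a separate hypothesis (rearrangement works for any non-negative sequences). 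Either way the content is the same.
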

\begin{proof}
    Let $f$ denote the shared distribution of $X$ and $Y$ and $f_{X-Y}$ the distribution of $X-Y$. We compute directly, 
    \begin{align*}
        \sum_k  f^2 (X = k)
            &=
                \sum_k \mathbb{P}(X = k) \mathbb{P}(Y=k)
                    \\
            &=
                \mathbb{P}(X-Y = 0)
    \end{align*}
After taking logarithms, this shows that
\begin{align}
    H_2( X-Y ) = - \log f_{X-Y}(0),
\end{align}
thus the result follows from demonstrating that $f_{X-Y}(0) = \|f_{X-Y}\|_\infty$.  To this end, we recall the elementary rearrangement inequality (see for instance \cite{marshall2010inequalities}) that for non-negative sequences $x,y \in \ell_2$, 
\begin{align*}
    \sum_i x_i y_i \leq \sum_i x^{\downarrow}_i y^{\downarrow}_i
\end{align*}
where $x^\downarrow$ is the sequence $x$ rearranged in decreasing order.  If we denote $\tau_n f(k) = f(n+k)$ then
\begin{align*}
    f_{X-Y}(n)
        &=
            \sum_k \tau_n f(k) f(k)
                \\
        &\leq 
            \sum_k (\tau_n f)^{\downarrow}(k) f^{\downarrow}(k).
\end{align*}
However since $\tau_n f$ is just a translation of $f$, $(\tau_n f)^\downarrow = f^\downarrow$ and since $\sum_k (f^\downarrow)^2(k) = \sum_k f^2(k) = f_{X-Y}(0)$ our result follows.
\end{proof}
When $\alpha \leq 2$ a constant depending on $\alpha$ can be found using only monotonicity of R\'enyi, see Theorem 6.2 in \cite{melbourne2020reversals}.

\begin{proof}[Proof of Theorem \ref{thm: Reverse Renyi EPI}]
    We use the notation $c(\alpha) = \alpha^{\frac 1 {\alpha - 1}}$, with $c(\infty) \coloneqq 1$.  We prove the case that $\alpha > 2$.  
    \begin{align*}
        H_\alpha(X-Y) 
            &\leq 
                H_\infty(X-Y) + \log \frac{c(\alpha)}{c(\infty)}
                    \\
            &=
                H_2(X) + \log \frac{c(\alpha)}{c(\infty)}
                    \\
            &\leq 
                H_\alpha(X) + \log \frac{c(2)}{c(\alpha)} + \log \frac{c(\alpha)}{c(\infty)}
                    \\
            &=
                H_\alpha(X) + \log 2.
    \end{align*}
\end{proof}

\section{Proof of Theorem \ref{thm: conjecture}} \label{Sec: Proof of Theorem}
For $x$ a monotone, log-concave sequence $\ell_1$ sequence, we denote $\Phi_x: (0,\infty) \to \mathbb{R}$,
\begin{align*}
    \Phi_x(t) \coloneqq \log \left( t \sum_i x_i^t \right).
\end{align*}
To prove that $\Phi_x$ is always strictly concave, we will first start with some reductions.  For $x$ a log-concave sequence and $p > q$ we wish to prove,
\begin{align} \label{eq: concavity statement explicit}
    \Phi_x((1-s) p + s q ) - (1-s) \Phi_x(p) - s \Phi_x(q) \geq 0.
\end{align}
If we denote by $x^q$, the monotone log-concave sequence $(x^q)_i = (x^q_i)$ and $\tilde{p} = p/q$, then by algebraic manipulation the left hand side of \eqref{eq: concavity statement explicit} is exactly
\begin{align} \label{eq: concavity at 1}
\Phi_{x^q} ((1-s) \tilde{p} + s 1 ) - (1-s) \Phi_{x^q} (\tilde{p}) - s \Phi_{x^q}(1) \geq 0.
\end{align}

Additionally observe that for a constant $c > 0$, with $cx$ denoting the sequence $(cx)_i = c x_i$ that $\Phi_{cx}(t) = \Phi_{x}(t) + t \log c$. Thus we can and will without loss of generality assume that $\sum_i x_i = 1$ and need only prove that for $p > 1$, and $s \in (0,1)$
\begin{align} \label{eq: local inequality}
    \Phi_x((1-s) p + s) \geq (1-s) \Phi_x(p) + s.
\end{align}

For the proof of this result we will derive the following lemma.
\begin{lem} \label{lem: geometric minimizes}
    For $x$ a non-Dirac, monotone log-concave probability sequence, $p > 1$, and $q \in (1,p)$, there exists a $\lambda \in (0,1)$ such that the sequence $z$ given by $z_k = (1-\lambda) \lambda^k$ satisfies
    \begin{align} \label{eq: equality for geometric}
        \sum_i x_i^p = \sum_i z_i^p
    \end{align}
    and 
    \begin{align}
        \sum_i x_i^q \geq \sum_i z_i^q
    \end{align}
\end{lem}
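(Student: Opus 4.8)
The plan is to realize the geometric sequence $z$ as the unique member of a one-parameter family matching the $p$-norm of $x$, and then to show the $q$-norm is monotone along a natural interpolation between $x$ and $z$ that is governed by a majorization relation. First I would set up the family: for $\lambda\in(0,1)$ let $z^{(\lambda)}_k=(1-\lambda)\lambda^k$ for $k\ge 0$, and compute $\sum_k (z^{(\lambda)}_k)^p=(1-\lambda)^p/(1-\lambda^p)$. As $\lambda\to 0^+$ this tends to $1=\sum_i x_i^p$ only in the Dirac limit, and as $\lambda\to 1^-$ it tends to $0$; since $x$ is non-Dirac and log-concave in $\ell_1$ with $\sum_i x_i=1$ we have $0<\sum_i x_i^p<1$, so by continuity and strict monotonicity of $\lambda\mapsto (1-\lambda)^p/(1-\lambda^p)$ on $(0,1)$ there is a unique $\lambda$ solving \eqref{eq: equality for geometric}. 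This disposes of the existence half.

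For the inequality $\sum_i x_i^q\ge\sum_i z_i^q$, the key idea — flagged in the introduction as the main novelty — is a \emph{second-order distributional majorization} between $x$ and its geometric counterpart. Concretely, normalize $x$ to be decreasing with $x_0=\|x\|_\infty$ (using the $\Phi_{cx}(t)=\Phi_x(t)+t\log c$ invariance and monotonicity), and compare the partial sums of $(x_i^p)_{i\ge 0}$, $(z_i^p)_{i\ge 0}$, both normalized to the same total. Log-concavity of $x$ forces $i\mapsto x_i^p$ to be log-concave, hence its normalized tail sums dominate those of the geometric (which is the extremal ``flattest'' log-concave shape under a fixed total and a fixed value at $0$); I would make this precise as a majorization statement $\big(x_i^p/\sum_j x_j^p\big)\prec\big(z_i^p/\sum_j z_j^p\big)$ in the sense recalled in the Appendix. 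Then writing $\sum_i x_i^q=\big(\sum_j x_j^p\big)\cdot\sum_i \phi\!\big(x_i^p/\sum_j x_j^p\big)$ with $\phi(u)=u^{q/p}$ concave (since $0<q/p<1$), Schur-concavity of $\sum_i\phi(\cdot)$ against the majorization gives $\sum_i x_i^q\ge\sum_i z_i^q$ once the $p$-norms agree.

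The main obstacle will be establishing the majorization itself, i.e. that $\sum_{i\ge N} x_i^p\big/\sum_i x_i^p \ge \sum_{i\ge N} z_i^p\big/\sum_i z_i^p$ for all $N$ (equivalently the reversed inequality for head sums), because this is not a one-line consequence of log-concavity: one must use both that the two sequences have the same mass and exploit the log-concavity of $(x_i^p)_i$ to compare it crossing-number-wise with the purely geometric (log-affine) sequence $(z_i^p)_i$. I expect the clean route is a \emph{single-crossing} argument: the difference $x_i^p-cz_i^p$, for the constant $c$ making total masses agree, changes sign at most once as $i$ increases (a log-concave sequence meets a log-affine one in at most two points, and the equal-mass constraint upgrades this to exactly one sign change of the partial-sum differences), which is exactly what is needed for the majorization and hence for the Schur-concavity conclusion. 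Handling the monotone (increasing) case and the precise bookkeeping of the support set of $x$ versus $\{0,1,2,\dots\}$ are the remaining technical points, but they are routine once the single-crossing comparison is in hand.
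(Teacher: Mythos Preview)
Your existence argument is fine and matches the paper's. The gap is in the inequality half: the discrete majorization $\bigl(x_i^p/\Sigma\bigr)\prec\bigl(z_i^p/\Sigma\bigr)$ you claim does not hold. Take $x=(\tfrac12,\tfrac12,0,\dots)$ and $p=2$; then $\sum_i x_i^2=\tfrac12$ forces $\lambda=\tfrac13$ and $z=(\tfrac23,\tfrac29,\tfrac{2}{27},\dots)$, and the normalized $p$-power sequences are $(\tfrac12,\tfrac12,0,\dots)$ versus $(\tfrac89,\tfrac{8}{81},\dots)$. Neither majorizes the other: at $k=1$ the head sums give $\tfrac89>\tfrac12$, but at $k=2$ they give $\tfrac{80}{81}<1$. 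In general the sign pattern of $x_i^p-z_i^p$ in the index $i$ is $-,+,-$ (the matched geometric has both a higher peak and a heavier tail than $x$), so the equal-mass constraint forces the partial sums $\sum_{i\le k}(x_i^p-z_i^p)$ to change sign exactly once --- which is precisely the statement that majorization fails in both directions, not that it holds. Schur-concavity of $u\mapsto u^{q/p}$ therefore cannot be invoked on these sequences.

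The paper runs the crossing argument in a different variable. One treats the level-count $F_x(t)=\#\{i:x_i>t\}$ as a probability density on $(0,\infty)$ (it integrates to $\sum_i x_i=1$), so that by the layer-cake formula $\sum_i x_i^r = r\int_0^\infty t^{r-1}F_x(t)\,dt = r\,\mathbb{E}V^{r-1}$ for $V\sim F_x$. The log-concave/log-affine comparison translates, via Proposition~\ref{prop:doublecrossing}, into $F_z\le F_x$ on a single interval in $t$ and $F_z\ge F_x$ off it; together with the matched moments $\mathbb{E}V^{p-1}=\mathbb{E}U^{p-1}$ (equivalent to $\sum x_i^p=\sum z_i^p$) this is exactly the hypothesis of the convex-order comparison in Theorem~\ref{convex ordering consequence}. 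Applying it to the concave map $s\mapsto s^{(q-1)/(p-1)}$ gives $\mathbb{E}V^{q-1}\ge\mathbb{E}U^{q-1}$, i.e.\ $\sum_i x_i^q\ge\sum_i z_i^q$. In short, the ``second-order distributional majorization'' advertised in the introduction is a convex ordering of the continuous variables $U,V$ built from $F_z,F_x$, not a Schur comparison of the sequences $(x_i^p)$ and $(z_i^p)$.
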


As we will see Lemma \ref{lem: geometric minimizes} reduces our problem to proving \eqref{eq: local inequality} for the geometric distribution.   To prove the Lemma, we establish a majorization between the distribution function of a monotone log-concave variable and its geometric counterpart.  

\begin{prop}\label{prop:doublecrossing}
For a sequence $x$, define $F_x (t) \coloneqq \#\lbrace i : x_i > t\rbrace$. Let $x$ be a log-concave, non-increasing sequence, and $z_k = C p^k$ for $C > 0$ and $p \in (0,1)$. Then there exist a finite interval $I$ such that $F_z (t) \leq F_x (t)$ if $t \in I$ and $F_z (t) \geq F_x (t)$ if $t \notin I$.
\end{prop}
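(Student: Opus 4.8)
The plan is to pass to the logarithmic scale and reduce everything to one convexity fact. Writing $s=\log t$ and, for any non-negative sequence $y$, $G_y(s)\coloneqq \#\{i:\log y_i>s\}=F_y(e^s)$, the proposition is equivalent to the statement that $A\coloneqq\{s:G_x(s)>G_z(s)\}$ is bounded and that no point of $B\coloneqq\{s:G_x(s)<G_z(s)\}$ lies strictly between two points of $A$. Granting this, $J\coloneqq\mathrm{conv}(A)$ is a bounded interval with $A\subseteq J$ and $J\cap B=\emptyset$, so $I\coloneqq\{e^s:s\in J\}$ is the desired finite interval; the only care needed is at the endpoints, handled by right-continuity of $G_x-G_z$ in $s$ (this forces $\min A$ to exist, and one may take $J$ half-open at its upper end if $\sup A\notin A$, and of the form $(-\infty,b)$ if $A$ is unbounded below, giving $I=(0,e^b)$). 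Since $F_x,F_z$ depend only on the multisets of values of $x,z$, we may translate $x$ so its (interval) support is an initial segment of $\{0,1,2,\dots\}$; then $\log x_0\ge\log x_1\ge\cdots$ is a concave non-increasing sequence and $\log z_k=\log C+k\log p$ is affine. Morally $G_x$ is the staircase under the inverse of a concave decreasing curve and $G_z$ the staircase under the inverse of an affine one, so $G_x-G_z$ behaves like a concave function, non-negative on an interval and non-positive outside it; the content of the proof is making this precise at the level of step functions.

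Boundedness of $A$ is immediate: for $s>\max(\log x_0,\log C)$ both $G_x(s)$ and $G_z(s)$ vanish. For the betweenness claim the key object is $\psi_m\coloneqq\log x_m-m\log p$, which is concave in $m$ on the support of $x$ because $\log x_m$ is concave (log-concavity of $x$) and $m\log p$ is linear. Suppose, for contradiction, that $s_1<s_2<s_3$ with $s_1,s_3\in A$ and $s_2\in B$. Using the elementary quantile identities $G_x(s)\ge n\iff\log x_{n-1}>s$ and $G_z(s)=n\iff \log C+(n-1)\log p>s\ge\log C+n\log p$, together with the monotonicity of $G_z$, I will extract, writing $n\coloneqq G_z(s_2)\ge 1$: from $s_2$ that $\psi_{n-1}<\log C$; from $s_1$, with $m'\coloneqq G_z(s_1)\ge n$, that $\psi_{m'}>\log C$ and $m'>n-1$; and from $s_3$, with $m''\coloneqq G_z(s_3)\le n$, that $\psi_{m''}>\log C$ and (since $\log x_{m''}>s_3>s_2\ge\log x_{n-1}$ and $\log x_\bullet$ is non-increasing) $m''<n-1$. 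As the support of $x$ is an interval containing $m''$ and $m'$, it contains $n-1$, and concavity gives $\psi_{n-1}\ge$ a convex combination of $\psi_{m''}$ and $\psi_{m'}$, both exceeding $\log C$; hence $\psi_{n-1}>\log C$, contradicting $\psi_{n-1}<\log C$. This proves the betweenness claim, and with it the proposition (degenerate cases, e.g. $x$ Dirac or finitely supported, are trivial).

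The conceptual heart — a concave sequence cannot dip below a value it exceeds on both sides — is a single line. The main obstacle is entirely bookkeeping: translating the staircase inequalities $G_x\gtrless G_z$ at a real level $s$ into the discrete comparisons of $\psi_m$ against $\log C$ at exactly the right integer indices, keeping the (numerous but routine) equality cases in the quantile identities straight, and organising the endpoint/degenerate cases so that the interval $I$ is correctly chosen.
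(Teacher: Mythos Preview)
Your argument is correct and rests on the same single idea the paper uses: the sequence $k\mapsto\log x_k-\log z_k=\psi_k-\log C$ is concave, so $\{k:x_k\ge z_k\}$ is a discrete interval. The difference is purely organizational. The paper is constructive: it names the crossing indices $a=\min\{k:x_k\ge z_k\}$ and $b=\sup\{k:x_k\ge z_k\}$, sets $I=[z_b,x_a)$, and verifies $F_x\ge F_z$ on $I$ by a short case split. You instead work non-constructively on the log scale: you show by contradiction that no point of $B=\{s:G_x(s)<G_z(s)\}$ can lie strictly between two points of $A=\{s:G_x(s)>G_z(s)\}$, and then take $I=\exp(\mathrm{conv}(A))$. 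Your quantile bookkeeping (extracting $\psi_{m''}>\log C$, $\psi_{n-1}<\log C$, $\psi_{m'}>\log C$ at indices $m''<n-1<m'$) is exactly the discrete translation of the paper's interval observation. One small payoff of your packaging is that the ``$t\notin I$'' direction comes for free, since $A\subseteq J$ forces $s\notin J\Rightarrow s\notin A\Rightarrow F_x\le F_z$; the paper's written proof only treats $t\in I$ explicitly. A minor wording slip: when you say ``$A$ is bounded'' you prove (and only need) bounded \emph{above}; you already account for the possibility that $A$ is unbounded below.
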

\begin{proof}
%From Proposition \ref{prop: discrete double crossing}, 
Define $a \coloneqq \min \{ k : x_k \geq z_k \}$, and $b \coloneqq \sup_k \{ k : x_k \geq z_k \}$. It follows from the log-concavity of $x$ and the log-affinity of $z$ that $\{k : x_k \geq z_k\}$ is a discrete interval.  Thus, the interval\footnote{With the interpretation that $\llbracket a , b \rrbracket = [a, \infty) \cap \mathbb{Z}$ when $b = \infty$.} $$\llbracket a, b \rrbracket = \{k : x_k \geq z_k \}.$$  Let $I = [z_b, x_a)$, with $z_b = 0$ in the case $b = +\infty$.
Let $t \in I$. Two cases will be considered: $t < z_a$ and $z_a \leq t$. First assume $z_b \leq t < z_a$. Let $m = \min\lbrace i : z_i \leq t\rbrace  = F_z (t)$. See that $a < m \leq b$: since $z_b \leq t$, then $m \leq b$ because $m$ is the minimum index such that $z$ satisfies such inequality. Also, if $m \leq a$, then we have $z_m \geq z_a$ because $z$ is decreasing, which gives us both $z_m \leq t$ by definition of $m$ and $z_m > t$ because $z_a > t$. This is a contradiction, thus $a < m$. Finally, since $x_i$ is non-increasing and $a < m \leq b$, we must have 
		\begin{equation}\label{eq:z_y_x}
			z_m \leq t < z_{m-1} \leq x_{m-1} \leq x_{m-2} \leq \dots \leq x_0.
		\end{equation}
From \eqref{eq:z_y_x} we see that $F_x (t) \geq m = F_z(t)$. 
Now, suppose $z_a \leq t < x_a$. Since $z_a \leq t$ then $F_z (t) \leq a$. Now, since $t < x_a$ and $x_i$ is non-increasing, so $x_0 \geq x_1 \geq \dots \geq x_a > t$ and thus $F_x (t) \geq a+1$. Therefore $F_z (t) \leq a < a+1  \leq F_x (t)$.
\end{proof}

% \begin{remark}\label{remark double crossing}
% Since, in terms of the sequence $z$, we are only relying on Proposition $\ref{prop: discrete double crossing}$, this holds for $z_k = C p^k$, positive $C$.
% \end{remark}
The following is a standard fact that holds for general measure spaces.  It follows from the layer-cake representation of a non-negative function, a change of variables, and an application of Fubini-Tonelli.
\begin{prop} \label{prop: Cake layer FUbTON}
Let $X$ be a random variable on the non-negative integers and the sequence $x_i \coloneqq \mathbb{P}(X = i)$, then for $t \geq 1$, $F_x (\lambda)$ as defined in Proposition \ref{prop:doublecrossing} satisfies
\begin{align}
   \sum_{i} x_i^t = t \int_0^\infty \lambda^{t-1} F_x(\lambda) d\lambda.
\end{align}
In particular, $F_x$ is a probability distribution function on $(0,\infty)$ when $x$ is a log-concave probability sequence.
\end{prop}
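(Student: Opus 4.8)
The plan is to prove the identity by the layer-cake (distribution-function) representation of each summand together with Tonelli's theorem, and then to read off the final assertion from the case $t=1$.

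First I would record the elementary one-variable identity: for any $a \ge 0$ and $t \ge 1$,
\[
    a^t = t\int_0^\infty \lambda^{t-1}\,\mathbbm{1}_{\{\lambda < a\}}\,d\lambda ,
\]
which is just $t\int_0^a \lambda^{t-1}\,d\lambda = a^t$. Applying this with $a = x_i$ and summing over $i$ gives
\[
    \sum_i x_i^t = t\sum_i \int_0^\infty \lambda^{t-1}\,\mathbbm{1}_{\{x_i > \lambda\}}\,d\lambda .
\]
Since every term is non-negative, Tonelli's theorem permits exchanging the sum and the integral, and then
\[
    \sum_i \mathbbm{1}_{\{x_i > \lambda\}} = \#\{ i : x_i > \lambda\} = F_x(\lambda),
\]
so $\sum_i x_i^t = t\int_0^\infty \lambda^{t-1} F_x(\lambda)\,d\lambda$, as claimed. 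For $t \ge 1$ and a probability sequence $x$ one has $\sum_i x_i^t \le \sum_i x_i = 1 < \infty$, so all quantities are finite and there is nothing delicate about the interchange; this is the only role of the hypothesis $t \ge 1$.

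For the last sentence I would specialize to $t = 1$, which gives $\int_0^\infty F_x(\lambda)\,d\lambda = \sum_i x_i = 1$ when $x$ is a log-concave probability sequence. It remains to note that $F_x$ has the shape of a complementary distribution function on $(0,\infty)$: it is non-negative and non-increasing; it is right-continuous, since for $\lambda_0 > 0$ the set $\{i : x_i > \lambda_0\}$ is finite while $\{i : x_i > \lambda\}$ increases to it as $\lambda \downarrow \lambda_0$; and $F_x(\lambda)\to 0$ as $\lambda\to\infty$ because $x\in\ell_1$ forces $x_i\to 0$, so only finitely many coordinates exceed any fixed positive level. Hence $F_x$ is the tail function of a probability distribution on $(0,\infty)$, the normalization being supplied by the $t=1$ identity.

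There is no real obstacle here; the argument is bookkeeping. The only two points worth stating explicitly are that the interchange of $\sum_i$ with $\int_0^\infty$ is justified by non-negativity (Tonelli rather than Fubini), and that the strict inequality in $F_x(t)=\#\{i : x_i > t\}$ is correctly matched by the strict inequality in $\mathbbm{1}_{\{\lambda < x_i\}}$ in the layer-cake formula — the discrepancy lives on the Lebesgue-null set $\{\lambda = x_i\}$ and so is invisible to the integral while still pinning down exactly which distribution function appears.
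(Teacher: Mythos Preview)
Your argument is correct and is exactly the layer-cake plus Fubini--Tonelli approach the paper indicates. One small correction to the last paragraph: the paper is asserting that $F_x$ is a probability \emph{density} on $(0,\infty)$ (confirmed by the later phrase ``let $U$ be a random variable with density $F_z$''), not a tail function---a tail function is bounded by $1$, whereas $F_x(\lambda)$ can exceed any bound as $\lambda\downarrow 0$---so your computation $\int_0^\infty F_x=1$ with $F_x\ge 0$ is already the right content and only the label needs adjusting.
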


\begin{lem}\label{convex ordering lemma}
	If $U, V$ are non-negative random variables with densities $f, g$ respectively,  such that $ \mathbb{E}(U) = \mathbb{E}(V) $, and $ f \leq g $  on an interval $I$, and $ f \geq g $ outside $I$, then
	\begin{equation*}
	    \mathbb{E}\left(w(U)\right) \geq \mathbb{E}\left(w(V)\right)
	\end{equation*}
	for any convex function $w$. The inequality reverses if $w$ is concave.
\end{lem}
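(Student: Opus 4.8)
The plan is to prove this via the standard characterization of the convex (stochastic) order through ``cut criterion''—once $f$ and $g$ cross exactly once (here, $f-g$ changes sign twice, but the single-crossing of the difference $f-g$ after the sign pattern $-,+,-$ is governed by the interval $I$), and the means agree, one gets $U \leq_{cx} V$. Concretely, I would first reduce to establishing $\mathbb{E}(U-a)_+ \geq \mathbb{E}(V-a)_+$ for all real $a$, since this, together with $\mathbb{E} U = \mathbb{E} V$ and a density/approximation argument, implies $\mathbb{E} w(U) \geq \mathbb{E} w(V)$ for every convex $w$ (a convex $w$ is, up to an affine function which the equal-means condition kills, an increasing ``integral'' of functions $x \mapsto (x-a)_+$, or one directly uses that convex functions are suprema of affine minorants and a limiting argument; I would just cite the classical result, e.g.\ from the majorization appendix or Shaked–Shanthikumar).

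The key computation is then $\mathbb{E}(U-a)_+ - \mathbb{E}(V-a)_+ = \int_a^\infty (f(x)-g(x))\,\mathrm{d}x$ after an integration by parts, or equivalently $= \int_a^\infty \big(\overline{G}(x) - \overline{F}(x)\big)\,\mathrm{d}x$ where $\overline{F}, \overline{G}$ are the survival functions; I would write $h \coloneqq f - g$ and study $\Delta(a) \coloneqq \int_a^\infty h(x)\,\mathrm{d}x$. Now $\Delta(-\infty) = \int_{\mathbb R} h = 0$ (both are densities) and $\Delta(+\infty) = 0$. The sign hypothesis says $h \geq 0$ on $(-\infty,\inf I)$, $h \leq 0$ on $I$, $h \geq 0$ on $(\sup I, \infty)$. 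Therefore $\Delta$ is nondecreasing on $I^c_{\text{left}}$... more carefully: $\Delta'(a) = -h(a)$, so $\Delta$ is nonincreasing where $h \geq 0$ and nondecreasing where $h \leq 0$. Hence $\Delta$ decreases on $(-\infty, \inf I)$ from $0$, then increases on $I$, then decreases on $(\sup I, \infty)$ back to $0$. A function that starts at $0$, decreases, increases, then decreases to $0$ need not be everywhere $\geq 0$—so the naive monotonicity argument is not quite enough, and this is where I must use the equal-means condition.

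Here is the fix, and the step I expect to be the main obstacle: I use $\mathbb{E} U = \mathbb{E} V$, which says $\int_{\mathbb R} \Delta(a)\,\mathrm{d}a = \int x\,h(x)\,\mathrm{d}x = 0$ (integrating $\Delta(a) = \int_a^\infty h$ over $a \in \mathbb{R}$, or over $[0,\infty)$ using nonnegativity of $U,V$, and applying Fubini). Combined with the shape of $\Delta$ (decreasing–increasing–decreasing, vanishing at both ends), I claim $\Delta \geq 0$ everywhere. Indeed, on the last piece $(\sup I, \infty)$, $\Delta$ is nonincreasing to $0$, hence $\Delta \geq 0$ there. On the first piece $(-\infty,\inf I)$, $\Delta$ starts at $0$ and is nonincreasing, so $\Delta \leq 0$ there; on the middle piece $\Delta$ is nondecreasing. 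So $\Delta$ is $\leq 0$ to the left of some point $c \in \overline I$ and $\geq 0$ to the right—$\Delta$ itself changes sign exactly once, from $-$ to $+$. But then $\int \Delta = 0$ forces... actually a single sign change with zero integral does not force $\Delta \equiv 0$; rather I should iterate: apply the same reasoning to conclude the conclusion directly. Let me instead argue at the level of $w$: since $\Delta$ has a single sign change $-$ to $+$ at $c$, write $\mathbb{E} w(U) - \mathbb{E} w(V) = \int w'(a)\,\big({-}\Delta'... \big)$—cleaner: $\mathbb{E} w(U) - \mathbb{E} w(V) = \int_{\mathbb R} w''(\mathrm{d}a)\,\Delta(a)$ in the distributional sense (integrate by parts twice, boundary terms vanish by the $\ell_1$/mean conditions), and since $w'' \geq 0$ as a measure and... no, $\Delta$ changes sign. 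The genuinely correct classical statement is: single crossing of $f-g$ (sign pattern $+,-,+$ on $h$) plus equal means is \emph{exactly} the condition $U \leq_{cx} V$; the two-fold sign change of $h$ with the $+,-,+$ pattern is the hypothesis of the ``cut criterion'' / Karlin–Novikoff theorem. So the honest plan is: invoke the Karlin–Novikoff cut criterion for the convex order (the number of sign changes of $f-g$ is $2$ with the last sign $+$, and $\mathbb{E} U = \mathbb{E} V$), which yields $\mathbb{E} w(U) \geq \mathbb{E} w(V)$ for all convex $w$; I would include a short self-contained proof of this cut criterion along the lines above, being careful that the correct reduction is to show $\Delta(a) = \int_a^\infty(f-g) \geq 0$ for all $a$ \emph{using} that $\int_{\mathbb R}\Delta = 0$ together with the derived single sign change of $\Delta$: if $\Delta$ changes sign once from $-$ to $+$ and integrates to $0$ while also $\Delta \to 0$ at $\pm\infty$ with the stated piecewise monotonicity, one checks the $\leq 0$ part can only occur on the first interval where $\Delta$ is monotone decreasing from $0$, which is impossible unless that interval is degenerate—so $\Delta \geq 0$ throughout. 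The main obstacle is getting this last sign-chasing argument airtight; I would cite \cite{marshall2010inequalities} or Shaked–Shanthikumar for the cut criterion as a backstop. The concave case follows by applying the convex case to $-w$.
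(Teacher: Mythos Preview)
Your overall strategy---reduce to the call functions $(x-a)_+$ via the standard convex-order characterization and then exploit the sign pattern of $f-g$---is exactly the paper's approach. But there is a concrete error that sends your sign-chasing in circles: the identity
\[
\mathbb{E}(U-a)_+ - \mathbb{E}(V-a)_+ \;=\; \int_a^\infty (f(x)-g(x))\,\mathrm{d}x
\]
is false. The right-hand side equals $\mathbb{P}(U>a)-\mathbb{P}(V>a)$, not the difference of expected positive parts. One more integration is needed:
\[
\mathbb{E}(U-a)_+ - \mathbb{E}(V-a)_+ \;=\; \int_a^\infty \bigl(\overline F(t)-\overline G(t)\bigr)\,\mathrm{d}t \;=\; \int_a^\infty \Delta(t)\,\mathrm{d}t \;=:\; \Pi(a),
\]
in your notation $\Delta(t)=\int_t^\infty h$. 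You correctly establish that $\Delta$ has a \emph{single} sign change, from $-$ to $+$. Hence $\Pi'=-\Delta$ is first nonnegative then nonpositive, so $\Pi$ increases and then decreases. Since the variables are nonnegative with equal means, $\Pi(0)=\mathbb{E}U-\mathbb{E}V=0$ and $\Pi(\infty)=0$, whence $\Pi\geq 0$ everywhere. That is the whole proof, and it is precisely the paper's argument: their $\Psi$ is your $\Pi$, their $\Psi'$ is $-\Delta$, and their $\Psi''=g-f=-h$.

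Your repeated attempt to show $\Delta\geq 0$ directly cannot work: $\Delta\geq 0$ would be first-order stochastic dominance $U\geq_{\mathrm{st}}V$, which is generally false under the stated hypotheses (and your own analysis shows $\Delta\leq 0$ on the interval to the left of $I$). The equal-means condition enters at the level of $\Pi$ (it gives $\Pi(0)=0$, equivalently your observation $\int\Delta=0$), not at the level of $\Delta$. Once you integrate one more time, no appeal to Karlin--Novikoff as a black box is needed.
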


The proof of Lemma \ref{convex ordering lemma} is classical, and given as Theorem \ref{thm: convex ordering proof for two crossing} in the Appendix for completeness.

\begin{thm}\label{convex ordering consequence}
    If $U, V$ are non-negative random variables with densities $f$ and $g$ respectively, that satisfy $\mathbb{E}(U^p) = \mathbb{E}(V^p)$ for $p > 0$ and $f \leq g$ on an interval $I$, and $f \geq g$ outside of $I$, then 
    \begin{equation*}
        \mathbb{E}\left(w(U^p)\right) \geq \mathbb{E} \left(w(V^p)\right)
    \end{equation*}
    for any convex function $w$. The inequality reverses if $w$ is concave.
\end{thm}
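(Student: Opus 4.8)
The plan is to reduce Theorem \ref{convex ordering consequence} directly to Lemma \ref{convex ordering lemma} by pushing the two variables forward under the power map $\phi(u) = u^p$. Since $p > 0$, $\phi$ is a strictly increasing $C^1$ bijection of $(0,\infty)$ onto itself, so $\tilde U \coloneqq U^p$ and $\tilde V \coloneqq V^p$ are again non-negative random variables, and the standard change-of-variables formula for densities (which needs only strict monotonicity and smoothness of $\phi$ on $(0,\infty)$, where the densities live, since $f,g$ being densities rules out atoms) gives them the densities
\begin{align*}
    \tilde f(y) = f(y^{1/p})\,\frac{y^{1/p - 1}}{p}, \qquad \tilde g(y) = g(y^{1/p})\,\frac{y^{1/p-1}}{p}, \qquad y > 0.
\end{align*}

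First I would record that the mean-matching hypothesis of Lemma \ref{convex ordering lemma} holds for $\tilde U,\tilde V$, since $\mathbb{E}(\tilde U) = \mathbb{E}(U^p) = \mathbb{E}(V^p) = \mathbb{E}(\tilde V)$ is exactly the assumption $\mathbb{E}(U^p)=\mathbb{E}(V^p)$. Next I would verify the interval-crossing condition for $\tilde f$ and $\tilde g$: the factor $y^{1/p-1}/p$ is strictly positive on $(0,\infty)$, so $\tilde f(y) - \tilde g(y)$ has the same sign as $f(y^{1/p}) - g(y^{1/p})$, which by hypothesis is $\leq 0$ precisely when $y^{1/p}\in I$, i.e. when $y \in \phi(I)$, and $\geq 0$ otherwise. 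Because $\phi$ is continuous and monotone, $\phi(I)$ is again an interval $\tilde I$ (possibly unbounded or degenerate), so $\tilde f \leq \tilde g$ on $\tilde I$ and $\tilde f \geq \tilde g$ on its complement.

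With these two facts in place, Lemma \ref{convex ordering lemma} applied to $\tilde U$ and $\tilde V$ yields $\mathbb{E}\left(w(\tilde U)\right) \geq \mathbb{E}\left(w(\tilde V)\right)$ for every convex $w$, with the inequality reversed for concave $w$; rewriting $\tilde U = U^p$, $\tilde V = V^p$ gives precisely the claimed statement. There is no real obstacle here: the only two points meriting a sentence of care are the legitimacy of the density change-of-variables under $u\mapsto u^p$ and the elementary observation that the monotone continuous image of the crossing interval $I$ is still an interval, after which all the substance of the theorem is carried by Lemma \ref{convex ordering lemma}.
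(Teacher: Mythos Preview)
Your proposal is correct and follows essentially the same route as the paper: both push $U$ and $V$ forward under $u\mapsto u^p$, compute the transformed densities $\tilde f(y)=f(y^{1/p})\,y^{(1-p)/p}/p$ and $\tilde g(y)=g(y^{1/p})\,y^{(1-p)/p}/p$, note that the crossing interval becomes $I^p=\phi(I)$, and then invoke Lemma~\ref{convex ordering lemma}. Your write-up is somewhat more detailed (explicitly checking that $\phi(I)$ is an interval and that the sign of $\tilde f-\tilde g$ matches that of $f-g$), but the argument is the same.
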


\begin{proof}
Follows directly from Lemma \ref{convex ordering lemma}. Indeed, $U^p$ has density $\tilde{f}(x) = f(x^{\frac 1 p }) x^{\frac{1-p}{p}}p^{-1}$ while $V^p$ has density $\tilde{g}(x) = g(x^{\frac 1 p}) x^{\frac{1-p}{p}} p^{-1}$ so that $U^p$ and $V^p$ satisfy the hypothesis of Lemma \ref{convex ordering lemma} for the interval $I^p \coloneqq \{ w : w = x^p, x \in I \}$.  
\end{proof}

\begin{proof}[Proof of Lemma \ref{lem: geometric minimizes}]

	For $p > 1$, and $x$ not a point mass, $0 < \sum x_{i}^p < \sum x_i = 1$. Then, observe that $\Psi(\lambda) \coloneqq \sum_{k=0}^\infty \left( (1-\lambda) \lambda^k \right)^p = \frac{(1-\lambda)^p}{1-\lambda^p}$. By the intermediate value theorem, since $\Psi(0) = 1$ and $\lim_{\lambda \to 1} \Psi(\lambda) \longrightarrow 0$ as $\lambda \rightarrow 1$ (L'Hospital), there exists $\lambda$ such that \eqref{eq: equality for geometric} holds. \\
	
Let $x$ be log-concave, non-increasing with $\sum_i x_i = 1$, let $z$ be geometric and let $p$ be such that $\sum x_{i}^p = \sum z_{i}^p$. Let $U$ be a random variable with density $F_z$, and $V$ be a random variable with density $F_x$. Since $\sum x_{i}^p = \sum z_{i}^p$ then $\frac{1}{p} \sum x_{i}^p = \frac{1}{p} \sum z_{i}^p$, which implies $\mathbb{E}\left(V^{p-1}\right) = \mathbb{E}\left(U^{p-1}\right)$ by Proposition \ref{prop: Cake layer FUbTON}. With $p > 1$ and $q \in (1,p)$, we have that $g(x) = x^{\frac{q-1}{p-1}}$ is concave, thus $\mathbb{E}\left(g(V^{p-1})\right) \geq \mathbb{E}\left(g(U^{p-1})\right)$ by Proposition \ref{prop:doublecrossing} and Theorem \ref{convex ordering consequence}. Thus $\mathbb{E}\left(V^{q-1}\right) \geq \mathbb{E}\left(U^{q-1}\right)$ and, multiplying both sides by $q$ and using Proposition \ref{prop: Cake layer FUbTON}, we get $\sum x_{i}^q \geq \sum z_{i}^q$.
\end{proof}

% \begin{thm}\label{thm: forall q}
% If $x$ is a non-increasing, log-concave probability distribution, and $z$ is a geometric distribution satisfying $\sum_i x_{i}^p = \sum_i z_{i}^p$ for some $p > 1$, then
% \begin{equation*}
%     \forall \, q \in (1, p): \sum_i x_{i}^q \geq \sum_i z_{i}^q
% \end{equation*}
% \end{thm}

% \begin{lem}\label{lemma:geomdistr}
% 	Let $x_i$ be a probability distribution and $p > 1$. Then there exists a geometric distribution $z_i = (1-q) q^i$ such that $\sum x_{i}^p = \sum z_{i}^p$.
% \end{lem}
The last ingredient of the proof of Theorem \ref{thm: conjecture} is to prove it in the special case that the sequence is geometric.
\begin{prop} \label{prop: Special case geo distr}
Let $z = (z_k)$ be a geometric distribution, i.e., $z_k = (1-\lambda)\lambda^k$ for $\lambda \in (0,1)$ and $k \in \lbrace 0, 1, \dots\rbrace$. Then 
\begin{align}
    \Phi_z (t) = \log \left[ t \sum_i z_{i}^t\right]
\end{align}
is a concave function in $(0, +\infty)$.
\end{prop}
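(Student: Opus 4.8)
The plan is to compute $\Phi_z$ in closed form and reduce the concavity claim to a one-variable elementary inequality. First I would sum the geometric series: for $t>0$,
\begin{align*}
    \sum_{k\ge 0} z_k^t = (1-\lambda)^t \sum_{k\ge 0}\lambda^{kt} = \frac{(1-\lambda)^t}{1-\lambda^t},
\end{align*}
so that $\Phi_z(t) = \log t + t\log(1-\lambda) - \log(1-\lambda^t)$. The middle term is affine in $t$ and thus irrelevant to concavity, so it suffices to prove that $g(t) \coloneqq \log t - \log(1-\lambda^t)$ is (strictly) concave on $(0,\infty)$.

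Next I would substitute $\lambda = e^{-a}$ with $a>0$ and differentiate twice. A short computation gives $g'(t) = \frac{1}{t} - \frac{a}{e^{at}-1}$ and hence
\begin{align*}
    g''(t) = -\frac{1}{t^2} + \frac{a^2 e^{at}}{(e^{at}-1)^2}.
\end{align*}
Then $g''(t) \le 0$ is equivalent, after multiplying through by the strictly positive quantity $t^2(e^{at}-1)^2$, to $a^2 t^2 e^{at} \le (e^{at}-1)^2$; since both sides are nonnegative, this is in turn equivalent to $at\,e^{at/2} \le e^{at}-1$, and dividing by $e^{at/2}>0$ and setting $u = at/2 > 0$ this reads simply $2u \le e^{u}-e^{-u} = 2\sinh u$.

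Finally, the inequality $u \le \sinh u$ for $u\ge 0$ is elementary — for instance, $\sinh u - u$ vanishes at $u=0$ and has derivative $\cosh u - 1 \ge 0$ — and it is strict for $u>0$. Hence $g''<0$ on $(0,\infty)$, which yields the strict concavity of $g$, and therefore of $\Phi_z$, on $(0,\infty)$. I do not expect a genuine obstacle here: the only real content is observing that the second-derivative inequality collapses to $\sinh u \ge u$, and everything else is bookkeeping. The single point requiring care is that each reduction step (clearing denominators, passing to square roots, dividing by $e^{at/2}$) is multiplication or division by a strictly positive quantity, so the chain of equivalences is legitimate.
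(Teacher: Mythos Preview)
Your proof is correct and follows essentially the same route as the paper: compute $\Phi_z$ in closed form, take the second derivative, and verify it is negative. The only difference is in the endgame --- the paper substitutes $y=\lambda^t$ and proves $y\log^2 y \le (1-y)^2$ through a multi-step derivative analysis of $g(y)=y\log^2 y-(1-y)^2$, whereas your substitution $\lambda=e^{-a}$ collapses the same inequality directly to $\sinh u \ge u$, which is cleaner and just as rigorous.
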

\begin{proof}
See that 
\begin{align*}
    \Phi_z (t) &= \log\left[t (1-\lambda)^t\right] + \log\left[\sum_i (\lambda^t)^i\right]\\
	&= \log t + t \log (1-\lambda) - \log(1-\lambda^t),
\end{align*}
thus 
\begin{align*}
    \Phi_{z}^{''}(t) &= \frac{-1}{t^2} + \frac{\lambda^t}{(1-\lambda^t)^2} \log^2 \lambda \\
    &= \frac{\lambda^t \log^2 \lambda^t - (1-\lambda^t)^2}{\left((1-\lambda^t) t \right)^2}, 
\end{align*}
so $f''(t) \leq 0$ if and only if $\lambda^t \log^2 \lambda^t - (1-\lambda^t)^2 \leq 0$. To prove this, let us consider a variable $y = \lambda^t$ and $g(y) \coloneqq y \log^2 y - (1-y)^2$. To see $g(y) \leq 0$ we proceed in the following way. Clearly, $g(1) = 0$; we want to show this is the maximum value of $g$. This will occur if and only if $g'(1) = 0, g'(y) > 0$ for $y < 1$ and $g'(y) < 0$ for $y > 1$, where
\[g'(y) = 2 \log y + \log^2 y + 2(1-y) .\]
It is clear that $g'(1)=0$ and that there exist some $y < 1$ for which $g'(y) > 0$ (e.g., $y = 1/e$) and some $y > 1$ for which $g'(y) < 0$ (e.g. $y = e$), so it suffices to prove $g'$ is monotone to conclude $y=1$ is the only critical value of $g$. To that end, see that 
\[g''(y) =  \frac{2}{y} + \frac{2 \log y}{y} - 2 \]
is always non-negative. Indeed, $g''(y) \leq 0 \iff \frac{1}{y} + \frac{\log y}{y} \leq 1$, which is equivalent to say $h(y) = \frac{1}{y} + \frac{\log y}{y}$ has a maximum value of 1. This is easy to see as $h'(y) = -\frac{\log y}{y^2}$ is zero at $y=1$, is positive on $(0,1)$ and negative on $(1, +\infty)$, and $h(1) = 1$. 
\end{proof}

% \begin{thm}[Main conjecture]
%     Let $x = (x_i)$ be a monotone log-concave sequence. Then the function
%     \begin{equation}\label{eq:main_conjecture}
%         \Phi_x (t) = \log \left[t \sum_i x_{i}^t	\right]	
%     \end{equation}
% 	is concave on $(0, \infty)$.
% \end{thm}
\begin{proof}[Proof of Theorem \ref{thm: conjecture}]
By the aforementioned reductions, let $x_i$ be a non-increasing log-concave probability sequence, and $s \in (0,1)$. Then there exists, by Lemma \ref{lem: geometric minimizes}, a geometric distribution $z_i$ such that $\sum z_{i}^p = \sum x_{i}^p$ and moreover for all $q \in (1,p)$, $$\sum x_{i}^q \geq \sum z_{i}^q.$$ Taking $q = s + (1-s)p$, we have by Lemma \ref{lem: geometric minimizes}
$$\Phi_x (s +(1-s)p) \geq \Phi_z (s+(1-s)p).$$  
By Proposition \ref{prop: Special case geo distr} $\Phi_z$ is concave, and hence 
$$\Phi_z ((1-s) p + s) \geq  (1-s) \Phi_z (p) + s.$$ 
Then by hypothesis, $\Phi_x (p) = \Phi_z (p)$, and $\Phi_x (1) = \Phi_z (1) = 0$. Compiling these results gives the following sequence of equalities and inequalities,
\begin{align*}
    \Phi_x (s + (1-s) p) 
        &\geq \Phi_z (s+(1-s)p)
            \\
        &\geq (1-s)  \Phi_z (p)  +s
            \\
        &=  (1-s) \Phi_x (p) +s.
\end{align*}
Hence \eqref{eq: local inequality} holds, and we have concavity for any $\Phi_x$. 
\end{proof}

\begin{cor}
    For $X$ a monotone log-concave random variable, and $Z$ a geometric random variable such that 
    \[ 
        H_p(X) = H_p(Z)
    \]
    then for $q \geq p$,
    \[
        H_q(X) \geq H_q(Z)
    \]
    while
    \[
        H_q(X) \leq H_q(Z)
    \]
    for $ q \leq p$.
\end{cor}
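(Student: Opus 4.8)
The plan is to promote the chain of (in)equalities in the proof of Lemma~\ref{lem: geometric minimizes} to a statement about R\'enyi entropies, after first extending that argument from the range $q\in(1,p)$ to all $q>0$. I would begin with two harmless reductions: since each $H_q$ is invariant under the reflection $i\mapsto -i$, one may assume $X$ is non-increasing, and after a translation that $x_i=\mathbb P(X=i)$ is a non-increasing log-concave probability sequence supported on $\{0,1,2,\dots\}$; and since for $p\neq 1$ the hypothesis $H_p(X)=H_p(Z)$ is equivalent to $\sum_i x_i^p=\sum_i z_i^p$, the geometric law $z$ appearing in the statement is exactly the one furnished by Lemma~\ref{lem: geometric minimizes}. (A point mass is trivial, so I would assume otherwise.)

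Next I would treat the case $p>1$. Exactly as in the proof of Lemma~\ref{lem: geometric minimizes}, set $U\sim F_z$ and $V\sim F_x$. By Proposition~\ref{prop:doublecrossing} the density of $U$ is $\leq$ that of $V$ on a finite interval and $\geq$ it outside, and by Proposition~\ref{prop: Cake layer FUbTON}---whose layer-cake proof is valid for every exponent $t>0$, the relevant integrals converging because a log-concave $\ell_1$ sequence lies in every $\ell_t$---one has $\sum_i x_i^t=t\,\mathbb E(V^{t-1})$ and $\sum_i z_i^t=t\,\mathbb E(U^{t-1})$ for all $t>0$; in particular $\mathbb E(U^{p-1})=\mathbb E(V^{p-1})$. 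Then I would apply Theorem~\ref{convex ordering consequence} with $w(y)=y^{(q-1)/(p-1)}$: the exponent $(q-1)/(p-1)$ lies in $(0,1)$ precisely when $q\in(1,p)$---the case already used in Lemma~\ref{lem: geometric minimizes}, where $w$ is concave---and is $>1$ when $q>p$ and $<0$ when $q\in(0,1)$, where $w$ is convex. This yields $\sum_i x_i^q\geq\sum_i z_i^q$ for $q\in(1,p)$ and $\sum_i x_i^q\leq\sum_i z_i^q$ for $q\in(0,1)\cup(p,\infty)$, with equality at $q=p$. Writing $H_q(X)-H_q(Z)=\bigl(\log\sum_i x_i^q-\log\sum_i z_i^q\bigr)/(1-q)$ and tracking the sign of $1-q$ on each of the three ranges converts these into $H_q(X)\leq H_q(Z)$ for $0<q<1$, $H_q(X)\leq H_q(Z)$ for $1<q<p$, and $H_q(X)\geq H_q(Z)$ for $q>p$; that is, $H_q(X)\geq H_q(Z)$ for $q\geq p$ and $H_q(X)\leq H_q(Z)$ for $q\leq p$, for all $q\in(0,\infty)\setminus\{1\}$.

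The values $q=1$ and $q=\infty$ would then follow by continuity, as $q\mapsto H_q(X)$ is finite and continuous on $(0,\infty]$ for log-concave $X$ and $q\mapsto H_q(Z)$ is continuous via the explicit geometric formula. The case $p<1$ should go through in the same way: the matching geometric exists for the same reason, $\mathbb E(U^{p-1})=\mathbb E(V^{p-1})$ now involves the exponent $p-1\in(-1,0)$, and one only needs the routine observation that Proposition~\ref{prop: Cake layer FUbTON} and Theorem~\ref{convex ordering consequence} remain valid for this negative exponent, since $y\mapsto y^{p-1}$ is order-reversing but still carries intervals to intervals and complements to complements. For $p=1$ I would approximate: for $q>1$ pick $p'\in(1,q)$ and for $q<1$ any $p'>1$, apply the already-proved case $p=p'$ with the geometric $Z_{p'}$ matching $H_{p'}(X)$, and let $p'\to1$; since $(p,\lambda)\mapsto H_p(Z_\lambda)$ is jointly continuous and strictly monotone in $\lambda$ and $p\mapsto H_p(X)$ is continuous, $Z_{p'}\to Z$ and the inequality passes to the limit.

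I do not expect any genuinely hard step: Theorem~\ref{thm: conjecture}, via Lemma~\ref{lem: geometric minimizes} and Proposition~\ref{prop: Special case geo distr}, already supplies the one substantive ingredient---the extremality of the geometric distribution---so what remains is the sign bookkeeping above, two essentially cosmetic extensions of Proposition~\ref{prop: Cake layer FUbTON} and Theorem~\ref{convex ordering consequence} to exponents outside $[1,\infty)$, and a continuity argument. The only point that will merit care is the last one: verifying that the geometric law matching a prescribed value of $H_p$ depends continuously on $p$, equivalently that $\lambda\mapsto H_p(Z_\lambda)$ is a homeomorphism of $(0,1)$ onto $(0,\infty)$, which is what makes the reduction of the borderline case $p=1$ to $p\neq1$ legitimate.
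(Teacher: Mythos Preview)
Your proposal is correct and follows essentially the same route the paper intends: the paper omits the proof, pointing to the symmetric analogue (Lemma~\ref{lem: symmetric renyi compare}), whose argument is precisely the extension of Lemma~\ref{lem: geometric minimizes} to all $q>0$ via Proposition~\ref{prop:doublecrossing} and Theorem~\ref{convex ordering consequence}, followed by the sign bookkeeping on $1-q$ and continuity for the limiting values---exactly what you outline. Your explicit flagging of the negative-exponent extension needed for $p<1$ and the continuity argument for $p=1$ is, if anything, slightly more careful than the paper's own treatment of those points.
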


The proof is omitted as it is the same as the symmetric case which is given in detail in Section \ref{sec: Symmetric variables}.

\section{Extensions} \label{Sec: Disprove theorem}

A natural generalization of Theorem 1 was first conjectured in an early version of \cite{havrilla2019sharp}, and reiterated in \cite{melbourne2020reversals}.

\begin{ques}
    For $\gamma >0$ and a positive monotone concave sequence $(y_n)_{n=1}^N$ then the function
    \begin{align*}
        \Phi_y(t) \coloneqq \log \left( (t+\gamma) \sum_{n=1}^N y_n^{\frac{t}{\gamma}} \right)
    \end{align*}
    is concave for $t > - \gamma$.
\end{ques}

However the following counterexample precludes an affirmative answer.

Let $N=2$, $y = \{\lambda,1+\lambda\}$ and consider the points, $\{0,\gamma, 2\gamma\} \subseteq (-\gamma, \infty)$. Concavity of $\Phi_y$ would imply,
\begin{align}
    \exp \Phi_y^2(\gamma) \geq \exp \left( \Phi_y(0) \Phi(2 \gamma) \right),
\end{align}
which is,
\begin{align}
    4 \gamma^2 \left( 2 \lambda + 1 \right) \geq 6 \gamma^2 \left( \lambda^2 + (1+\lambda)^2 \right).
\end{align}
Taking the limit with $\lambda \to 0$ would imply $ 4 \geq 6$.

\section{Symmetric Variables} \label{sec: Symmetric variables}
A random variable on $\mathbb{Z}$ can be symmetric about a point $m \in \mathbb{Z}$ ($f(m +n) = f(m-n)$) or it could be symmetric about $n + \frac 1 2 $ for $n \in \mathbb{Z}$. For example $\mathbb{P}(X = 0) = \mathbb{P}(X=1) = \frac 1 2$ is symmetric about $0 + \frac  1 2$.  In this case, when a log-concave sequence $(x_i)_{i \in \mathbb{Z}}$ is symmetric about a point $n + \frac 1 2$, 
\begin{align*}
    \log \left( t \sum_i x_i^t \right) = \log \left( t \sum_{i > n} x_i^t\right) + \log 2
\end{align*}
is concave by Theorem \ref{thm: conjecture} as $(x_i)_{i > n}$ is monotone and log-concave.  Thus, we have the following corollary.
\begin{coro} \label{cor: Symmetric about a half point}
For $(x_i)_{i \in \mathbb{Z}}$ an $\ell_1$ log-concave sequence, symmetric about a point $n + \frac 1 2$,
\begin{align}\label{eq: concavity restated}
    t \mapsto \log \left( t \sum_i x_i^t \right)
\end{align}
is concave in $t$.  Moreover, if $X$ is a random variable satisfying $\mathbb{P}(X =i) = x_i$ then
\begin{align*}
    V(X) < 1,
\end{align*}
and
\begin{align*}
    H_p(X) > H_q(X) + \log \left( \frac{p^{p-1}}{q^{q-1}} \right).
\end{align*}
If $f$ denotes the density of $X$ then,
\begin{align*}
    \mathbb{P}( I(X)  \geq H(f) + t ) &\leq (1+t) e^{-t} 
\end{align*}
and when $t \leq 1$,
\begin{align*}
    \mathbb{P}( I(X) \leq H(f) - t ) \leq (1-t) e^t.
\end{align*}
\end{coro}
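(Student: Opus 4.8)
The plan is to observe that every assertion in the corollary follows formally from the single concavity statement \eqref{eq: concavity restated}, by replaying verbatim the arguments already given in Section \ref{sec: Applications}. So the only genuine content is the concavity of $t \mapsto \log\left(t\sum_i x_i^t\right)$ for a symmetric-about-$(n+\tfrac12)$ log-concave $\ell_1$ sequence, and that is exactly the displayed identity preceding the corollary statement: writing $S = \{i : i > n\}$, symmetry gives $\sum_i x_i^t = 2\sum_{i \in S} x_i^t$, hence $\log(t\sum_i x_i^t) = \log(t\sum_{i \in S} x_i^t) + \log 2$. The sequence $(x_i)_{i \in S}$ is log-concave (a restriction of a log-concave sequence to a tail is log-concave) and monotone non-increasing (by symmetry and log-concavity the sequence decreases away from the center, and $S$ lies entirely on one side of the center $n + \tfrac12$). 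Therefore Theorem \ref{thm: conjecture} applies to $(x_i)_{i \in S}$, giving strict concavity of $t \mapsto \log(t\sum_{i\in S} x_i^t)$, and adding the constant $\log 2$ preserves concavity.

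Granting \eqref{eq: concavity restated}, I would then deduce the four quantitative statements in order, each by citing the corresponding proof in Section \ref{sec: Applications}. First, the varentropy bound $V(X) < 1$: set $\Psi(t) = \log(t\sum_n f^t(n))$; the computation of $\Psi''$ in the proof of Theorem \ref{thm: varentropy bounds} is purely formal and gives $\Psi''(1) = V(X) - 1$, so concavity of $\Psi$ forces $V(X) < 1$. Second, the R\'enyi comparison: with $\lambda = \tfrac{q-1}{p-1}$ one has $q = \lambda p + (1-\lambda)\cdot 1$, and applying concavity of $\Psi$ at this convex combination, then exponentiating and rearranging exactly as in the proof of Theorem \ref{thm: Renyi entropy reversal}, yields $H_p(X) > H_q(X) + \log\bigl(p^{1/(p-1)} q^{-1/(q-1)}\bigr)$ for all $0 < q < p$, with the limiting cases $p,q \in \{1,\infty\}$ obtained by continuity using log-concavity. (I note the exponent $\tfrac{p^{p-1}}{q^{q-1}}$ appearing in the corollary statement should read $\tfrac{p^{1/(p-1)}}{q^{1/(q-1)}}$, matching Theorem \ref{thm: Renyi entropy reversal}; the proof produces the latter.) Third and fourth, the concentration bounds: the class of log-concave sequences symmetric about $n + \tfrac12$ is closed under the map $f \mapsto f^\alpha/\sum_m f^\alpha(m)$, so $V(X_\alpha) < 1$ for every $\alpha > 0$ by the varentropy bound just established; Lemma \ref{lem: FMW} with $K = 1$ then gives $\mathbb{P}(I(X) \geq H(f) + t) \leq e^{-r(t)} = (1+t)e^{-t}$ and, for $t \leq 1$, $\mathbb{P}(I(X) \leq H(f) - t) \leq e^{-r(-t)} = (1-t)e^{t}$, where $r(s) = s - \log(1+s)$.

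There is no real obstacle here; the corollary is essentially a packaging statement. The one point requiring a line of justification rather than pure citation is the reduction at the start — that $(x_i)_{i > n}$ is genuinely monotone and log-concave — and this is immediate from the fact that a log-concave sequence is unimodal with mode at (or adjacent to) its center of symmetry, so that on $\{i > n\}$ it is non-increasing, and that log-concavity of a two-sided sequence is inherited by any sub-interval. Everything downstream is word-for-word identical to Section \ref{sec: Applications}, with Theorem \ref{thm: conjecture} replaced by \eqref{eq: concavity restated}.
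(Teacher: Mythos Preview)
Your proposal is correct and follows essentially the same approach as the paper: the paper reduces to the monotone case via the identity $\log\bigl(t\sum_i x_i^t\bigr) = \log\bigl(t\sum_{i>n} x_i^t\bigr) + \log 2$ and invokes Theorem~\ref{thm: conjecture} on the one-sided tail, after which all four consequences are inherited verbatim from Section~\ref{sec: Applications}. You have also correctly flagged the typo in the R\'enyi exponent (it should be $p^{1/(p-1)}/q^{1/(q-1)}$ as in Theorem~\ref{thm: Renyi entropy reversal}).
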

\begin{remark}
See that this implies Theorem \ref{thm: concentration of information} is valid for sequences symmetric about a point $n + \frac{1}{2}$.
\end{remark}

However, in the case that $(x_i)$ is symmetric about a point $n \in \mathbb{Z}$, the concavity of \eqref{eq: concavity restated} is known to fail.  In spite of this, we show in the sequel that arguments from the proof of Theorem \ref{thm: conjecture} are able to recover sharp bounds on  the varentropy and the R\'enyi entropy in this setting.
\begin{defn}
    A sequence $z$ is symmetric geometric when there exists $t \in (0,1)$ and $C >0$ such that
    \begin{align*}
        z_n = C \lambda^{|n|}
    \end{align*}
    for $n \in \mathbb{Z}$.  When $C = \frac{1-\lambda}{1+\lambda}$ the sequence defines a probability distribution.  A random variable $Z$ is symmetric geometric when
    \begin{align*}
        \mathbb{P}(Z = n) = \frac{1-\lambda}{1+\lambda} \lambda^{|n|}.
    \end{align*}
\end{defn}
Given $p \in (0,\infty)$, and $X$ symmetric and log-concave, there exists $Z$ symmetric geometric, such that $\sum_n f_X^p(n) = \sum_n f_Z^p(n)$

\begin{prop}\label{prop symmetric equality}
        Let $x_i$ be a probability distribution over $\mathbb{Z}$. For $p \neq 1$, there exists a geometric sequence $z_i = \frac{1-\lambda}{1+\lambda} \lambda^{|i|}$ such that $\sum x_{i}^p = \sum z_{i}^p$.    
\end{prop}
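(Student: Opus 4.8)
\textbf{Proof plan for Proposition \ref{prop symmetric equality}.}
The plan is to run an intermediate value theorem argument on the single-parameter family of symmetric geometric sequences, exactly mirroring the first paragraph of the proof of Lemma \ref{lem: geometric minimizes}. First I would compute the target quantity explicitly: for $z_i = \frac{1-\lambda}{1+\lambda}\lambda^{|i|}$ one has
\begin{align*}
    \Psi(\lambda) \coloneqq \sum_{i \in \mathbb{Z}} z_i^p = \left(\frac{1-\lambda}{1+\lambda}\right)^p \sum_{i \in \mathbb{Z}} \lambda^{p|i|} = \left(\frac{1-\lambda}{1+\lambda}\right)^p \cdot \frac{1+\lambda^p}{1-\lambda^p},
\end{align*}
using $\sum_{i\in\mathbb{Z}}\lambda^{p|i|} = 1 + 2\sum_{i\geq 1}\lambda^{pi} = \frac{1+\lambda^p}{1-\lambda^p}$. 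I would then treat $\Psi$ as a continuous function of $\lambda$ on $(0,1)$.

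Next I would identify the range of $\Psi$. As $\lambda \to 0^+$ we get $\Psi(\lambda) \to 1$, and this is the supremum: a symmetric geometric with $\lambda\to 0$ degenerates to the point mass at $0$. As $\lambda \to 1^-$, the factor $\left(\frac{1-\lambda}{1+\lambda}\right)^p \to 0$ while $\frac{1+\lambda^p}{1-\lambda^p} \to \infty$; a short L'Hospital (or direct) computation shows the product tends to $0$ when $p > 1$ and to $+\infty$ when $p < 1$. Meanwhile, for the given probability distribution $x$, since $\sum_i x_i = 1$ and $x$ is not required to be a point mass, we have $0 < \sum_i x_i^p < 1$ when $p > 1$ and $\sum_i x_i^p > 1$ when $p \in (0,1)$ (with the point-mass case giving $\sum_i x_i^p = 1$, which is attained in the limit $\lambda \to 0$ as well, so the degenerate case is not an obstruction). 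In either case $\sum_i x_i^p$ lies in the interior of the range of $\Psi$, so by the intermediate value theorem there exists $\lambda \in (0,1)$ with $\Psi(\lambda) = \sum_i x_i^p$, which is the claim.

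The only mildly delicate point is the boundary behavior of $\Psi$ as $\lambda \to 1^-$ and making sure the case distinction $p>1$ versus $p<1$ is handled uniformly with the location of $\sum_i x_i^p$ relative to $1$; this is the same elementary limit computation already used in Lemma \ref{lem: geometric minimizes}, so I expect no real obstacle. One should also remark that if $x$ is itself a point mass then $\sum_i x_i^p = 1$ and the statement holds only in the limiting/degenerate sense, so strictly speaking the conclusion as stated presumes $x$ is non-Dirac, matching the hypothesis used elsewhere; I would add a parenthetical to that effect.
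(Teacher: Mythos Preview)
Your proposal is correct and follows essentially the same approach as the paper: compute $\Psi(\lambda)=\left(\frac{1-\lambda}{1+\lambda}\right)^p\frac{1+\lambda^p}{1-\lambda^p}$, observe $\Psi(0^+)=1$ and $\Psi(1^-)=0$ (for $p>1$) or $+\infty$ (for $p<1$), locate $\sum_i x_i^p$ in the appropriate range, and invoke the intermediate value theorem. Your treatment is in fact slightly more explicit than the paper's, which only spells out $p>1$ and dismisses $p\in(0,1)$ with ``a similar approach''; your remark about the non-Dirac caveat is also a reasonable addition, since the paper's statement and proof are silent on this boundary case.
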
    
\begin{proof}    
        Suppose $p > 1$. Then $0 \leq x_{i}^p \leq x_i$, so $0 \leq \sum_i x_{i}^p \leq \sum_i x_i = 1$. Now consider a geometric symmetric sequence $z_i$ with parameter $q$, and see that
        \begin{align*}
            \sum_i z_{i}^p &= \sum_i \left(\frac{1-\lambda}{1+\lambda} \lambda^{|i|}\right)^p\\
            &= \left(\frac{1-\lambda}{1+\lambda}\right)^p \sum_i \left(\lambda^p\right)^{|i|}\\
            &= \frac{(1-\lambda)^p}{(1+\lambda)^p} \frac{1+\lambda^p}{1-\lambda^p}.
        \end{align*}
        Let $S(\lambda) = \frac{(1-\lambda)^p}{(1+\lambda)^p} \frac{1+\lambda^p}{1-\lambda^p}$. Clearly $S(0) = 1$. while $\lim_{\lambda \to 1} \frac{(1-\lambda)^p}{1-\lambda^p} = 0$. Also $\lim_{\lambda \to 1} \frac{1+\lambda^p}{(1+\lambda)^p} = \frac{1}{2^{p-1}}$. Therefore $\lim_{\lambda \to 1} S(\lambda) = 0$. By the intermediate value theorem, since $S$ is continuous for $\lambda \in (0,1)$, there must be a $\lambda \in (0, 1)$ such that $S(\lambda) = \sum z_{i}^p = \sum_i x_{i}^p$.
        
        A similar approach will handle the case that $p \in (0,1)$.
        % Now suppose $p \in (0,1)$.  First see that $x_{i} \leq x_{i}^p$ so $1 \leq \sum_i x_{i}^p$.Taking $S(\lambda)$ as above, we will show $S(\lambda)$ takes all values in $(1, \infty)$. We still have $S(0) = 1$, but now $p \in (0,1)$ gives $\lim_{\lambda \to 1} \frac{(1-\lambda)^p}{1-\lambda^p} = \infty$ for $p \in (0,1)$. Also, $\frac{1+\lambda^p}{(1+\lambda^p)} \in [1, 1/2^{p-1}]$ for $\lambda, p \in (0,1)$, so we see that $\lim_{\lambda \to 1} S(\lambda) = \infty$, which completes the proof.
\end{proof}

\begin{prop}\label{symm double crossing}
        If $x_i$ is non-increasing for $i \geq 0$ and $z$ is symmetric geometric, then there exists a finite interval $I$ such that    
                \begin{align}    
                        F_x (t) \geq F_{z} (t) &\quad t \in I\\    
                        F_x (t) \leq F_{z} (t) &\quad t \notin I    
                \end{align}                                      
\end{prop}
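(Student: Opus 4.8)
The statement asserts a ``double crossing'' between the distribution functions $F_x$ and $F_z$ of a log-concave symmetric-about-a-point sequence $x$ and a symmetric geometric sequence $z$, completely analogous to Proposition \ref{prop:doublecrossing}. My plan is to reduce this symmetric case directly to the one-sided case already established. Since $x_i$ is symmetric about $0$ (after translating) and non-increasing for $i \geq 0$, we have $F_x(t) = \#\{i \in \mathbb{Z} : x_i > t\}$, and by symmetry this equals $2\,\#\{i > 0 : x_i > t\} + \mathbbm{1}_{\{x_0 > t\}}$; similarly $F_z(t) = 2\,\#\{i > 0 : z_i > t\} + \mathbbm{1}_{\{z_0 > t\}}$ with $z_i = C\lambda^{|i|}$. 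So up to the $\pm 1$ contribution from the index $0$, $F_x$ and $F_z$ are each twice the one-sided distribution functions of the non-increasing log-concave sequence $(x_i)_{i \geq 0}$ and the geometric sequence $(z_i)_{i\geq 0}= (C\lambda^i)_{i\geq 0}$.

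\textbf{Key steps.} First I would fix the normalization: both $x$ and $z$ are symmetric, and (as in Lemma \ref{lem: geometric minimizes}, here via Proposition \ref{prop symmetric equality}) we may assume $z$ is chosen with a prescribed matching of $p$-norms; but for the crossing statement alone no normalization is needed beyond positivity of $C$ and $\lambda \in (0,1)$. Second, set $E = \{k \geq 0 : x_k \geq z_k\}$; by log-concavity of $(x_k)_{k \geq 0}$ and log-affinity of $(z_k)_{k \geq 0}$, $E$ is a discrete interval $\llbracket a, b\rrbracket$ (with $b$ possibly $+\infty$), exactly as in the proof of Proposition \ref{prop:doublecrossing}. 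Third, I would apply Proposition \ref{prop:doublecrossing} verbatim to the one-sided sequences $(x_k)_{k\geq 0}$ and $(z_k)_{k \geq 0} = C p^k$ (with $p = \lambda$): there is a finite interval $J = [z_b, x_a)$ on which the one-sided count of $x$ dominates that of $z$, and outside of which it is dominated. Fourth, I would transfer this to the full sequences: writing $G_x(t), G_z(t)$ for the one-sided counts over $k \geq 1$ (dropping the index $0$), the same interval works for $G_x$ vs $G_z$ (the index $0$ only shifts counts by a constant that does not affect the crossing structure, or one folds it in by noting $x_0 \geq z_0 \iff 0 \in E$, i.e. $a = 0$). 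Then $F_x(t) - F_z(t) = 2(G_x(t) - G_z(t)) + (\mathbbm{1}_{\{x_0 > t\}} - \mathbbm{1}_{\{z_0 > t\}})$, and both summands are non-negative on a common interval and non-positive off it, because $x_0 \geq z_0$ forces $\{x_0 > t\} \supseteq \{z_0 > t\}$ so the indicator difference is non-negative precisely on $[z_0, x_0)$, which sits inside the region where $G_x \geq G_z$. Taking $I$ to be the appropriate finite interval (the union, which remains an interval by nesting) gives the claim.

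\textbf{Main obstacle.} The only delicate point is bookkeeping at the central index $0$ and the endpoints: one must check that the two ``positive regions'' — the interval from Proposition \ref{prop:doublecrossing} applied to the one-sided tails, and the interval $[z_0, x_0)$ from the central-index indicator — are nested (so their union is again an interval) rather than merely overlapping arbitrarily. This follows because $a = \min\{k \geq 0 : x_k \geq z_k\}$; if $a = 0$ then $x_0 \geq z_0$ and $[z_0, x_0) \subseteq [z_b, x_0) \subseteq [z_b, x_a]$-type containment holds after a short comparison, while if $a \geq 1$ then $x_0 < z_0$, the central indicator difference is non-positive everywhere, and it can only reinforce the ``$F_z \geq F_x$'' side. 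In either case monotonicity of $x$ on $i \geq 0$ and of $z$ pins down the ordering of the relevant thresholds. So the proof is essentially a reduction to Proposition \ref{prop:doublecrossing} plus careful tracking of the $\mathbbm{1}_{\{x_0 > t\}}$ term; I expect no genuinely new difficulty beyond this casework.
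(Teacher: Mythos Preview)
Your reduction to Proposition \ref{prop:doublecrossing} for the one-sided sequences is exactly the paper's approach, but you have made the bookkeeping harder than necessary by stripping off the central index and introducing the auxiliary counts $G_x, G_z$ over $k \geq 1$. This forces you into the nesting casework you flag as the ``main obstacle,'' and that casework, while doable, is not entirely trivial: the interval produced by Proposition \ref{prop:doublecrossing} applied to $(x_k)_{k\geq 1}$ versus $(z_k)_{k\geq 1}$ need not literally coincide with the one for $(x_k)_{k\geq 0}$ versus $(z_k)_{k\geq 0}$, so the claim ``the same interval works for $G_x$ vs $G_z$'' needs its own argument.

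The paper sidesteps all of this by keeping the index $0$ inside the one-sided sequence. Write $x^* = (x_i)_{i\geq 0}$ and $z^* = (z_i)_{i\geq 0}$; then for every $t$ below the respective maxima one has the exact identity $F_x(t) = 2F_{x^*}(t) - 1$ and $F_z(t) = 2F_{z^*}(t) - 1$ (and above the maxima all four functions vanish). Hence $F_x(t) \geq F_z(t)$ if and only if $F_{x^*}(t) \geq F_{z^*}(t)$, and the very interval $I$ furnished by Proposition \ref{prop:doublecrossing} for $x^*$ against $z^*$ works for $x$ against $z$ with no further analysis. I would recommend rewriting your argument along these lines: it eliminates the indicator term, the nesting discussion, and the $a=0$ versus $a\geq 1$ split entirely.
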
                     
\begin{proof}    
        We know the result to be true for $x^* = (x_i)_{i \geq 0}$ and $z^* = (z_i)_{i \geq 0}$ by Proposition \ref{prop:doublecrossing}. Now, see that 
        \[                                                                                                   
                2 F_{x^*} - 1 = F_x,         
        \]          
        and    
        \[    
                2 F_{z^*} - 1 = F_{z}.    
        \]              
        Furthermore, $F_x \geq F_z$ if and only if $2 F_{x^*} - 1 \geq 2 F_{z^*} - 1$ if and only if $F_{x^*} \geq F_{z^*}$. Similarly for $F_x \leq F_z$. Therefore the same interval $I$ given by Proposition \ref{prop:doublecrossing} satisfies our desired inequalities.        
\end{proof}

\begin{lem}\label{lem: symmetric renyi compare}
    Let $X$ be log-concave, symmetric about a point $n \in \mathbb{Z}$. Then there exists a symmetric geometric distribution $Z$, such that $H_p (X) = H_p (Z)$ and
    \begin{align*}
        H_q (X) \geq H_q (Z)
    \end{align*}
    for $q \geq p > 0$, and 
    \begin{align*}
        H_q (X) \leq H_q (Z),
    \end{align*}
    for $0<q \leq p$.
\end{lem}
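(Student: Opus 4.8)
The plan is to follow exactly the template established in the proof of Lemma \ref{lem: geometric minimizes}, but now using the symmetric versions of the auxiliary results. First I would use Proposition \ref{prop symmetric equality} to produce a symmetric geometric distribution $Z$ with $\sum_i f_X^p(i) = \sum_i f_Z^p(i)$, which is precisely the statement $H_p(X) = H_p(Z)$. By translating so that $X$ is symmetric about $0$ (this does not affect any R\'enyi entropy), the sequence $f_X$ is non-increasing on the non-negative integers, so Proposition \ref{symm double crossing} applies: the distribution functions $F_X$ and $F_Z$ cross exactly as required, with $F_X \geq F_Z$ on a finite interval $I$ and $F_X \leq F_Z$ off $I$.

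Next I would invoke Proposition \ref{prop: Cake layer FUbTON}, which holds for any non-negative integer supported sequence and extends verbatim to $\mathbb{Z}$-supported sequences (one simply sums over all of $\mathbb{Z}$; the layer-cake identity $\sum_i x_i^t = t\int_0^\infty \lambda^{t-1} F_x(\lambda)\,d\lambda$ is unchanged), to reinterpret $F_X$ and $F_Z$ as probability densities of non-negative random variables $V$ and $U$ respectively. The normalization $\sum_i f_X^p(i) = \sum_i f_Z^p(i)$ gives, after dividing by $p$ and applying Proposition \ref{prop: Cake layer FUbTON} with exponent $t = p$, that $\mathbb{E}(V^{p-1}) = \mathbb{E}(U^{p-1})$ (and for $p < 1$ one argues with the appropriate moment). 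Now comes the comparison step: for $q > p$ the function $w(s) = s^{\frac{q-1}{p-1}}$ is convex when $p,q > 1$, so by Theorem \ref{convex ordering consequence} applied to $V^{p-1}, U^{p-1}$ with the crossing structure furnished by Proposition \ref{symm double crossing}, we get $\mathbb{E}(V^{q-1}) \geq \mathbb{E}(U^{q-1})$, hence $\sum_i f_X^q(i) \geq \sum_i f_Z^q(i)$, hence $H_q(X) \geq H_q(Z)$ when $q > p$ (because $1/(1-q) < 0$). For $q < p$ (including $q < 1$) the relevant function is concave, reversing the inequality to $\sum_i f_X^q(i) \leq \sum_i f_Z^q(i)$, which again translates to $H_q(X) \leq H_q(Z)$ once the sign of $1-q$ is tracked. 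The limiting cases $q \in \{0,1,\infty\}$ follow by continuity, using log-concavity to control the limits exactly as in the proof of Theorem \ref{thm: Renyi entropy reversal}.

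The main obstacle, and the only place where care is genuinely needed, is bookkeeping the several sign changes simultaneously: the sign of $p-1$ (which decides whether one compares $(p-1)$-th moments or something else and whether $w$ should be the increasing or decreasing power), the sign of $q-p$ (convex versus concave $w$), and the sign of $1-q$ (which flips the moment inequality into the correct direction for $H_q$). There are several regimes — $1 < p < q$, $p < q < 1$, $p < 1 < q$, $q < p$ in each of these — and one must check that in every regime the chain ``crossing of $F$'s $\Rightarrow$ moment comparison via Theorem \ref{convex ordering consequence} $\Rightarrow$ $\ell_q$-norm comparison $\Rightarrow$ R\'enyi entropy comparison'' closes up with the claimed direction. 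I expect each individual regime to be a short computation, exactly parallel to the asymmetric case in Lemma \ref{lem: geometric minimizes}, and I would organize the write-up by first proving it for $1 < p < q$ in full and then remarking that the other regimes are handled identically, as the paper does elsewhere. No new analytic input beyond Propositions \ref{prop symmetric equality}, \ref{symm double crossing}, \ref{prop: Cake layer FUbTON} and Theorem \ref{convex ordering consequence} should be required.
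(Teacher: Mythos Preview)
Your proposal is correct and follows essentially the same route as the paper: existence of $Z$ via Proposition \ref{prop symmetric equality}, the crossing of $F_X$ and $F_Z$ via Proposition \ref{symm double crossing}, the layer-cake identity of Proposition \ref{prop: Cake layer FUbTON} to pass to moments, and Theorem \ref{convex ordering consequence} for the comparison, followed by the case split on the signs of $p-1$, $q-1$, $q-p$, with limiting cases handled by continuity. One small slip in your worked case $1<p<q$: with $U\sim F_z$, $V\sim F_x$ and $w$ convex, Lemma \ref{convex ordering lemma} actually gives $\mathbb{E}(U^{q-1})\geq\mathbb{E}(V^{q-1})$, i.e.\ $\sum_i f_Z^q(i)\geq\sum_i f_X^q(i)$, and it is this together with $1/(1-q)<0$ that yields $H_q(X)\geq H_q(Z)$; your intermediate line has the moment inequality reversed, though the final conclusion is right and this is exactly the bookkeeping you flagged.
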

\begin{proof}
    First, see that $H_p (X) = H_p (Z)$ if and only if $\sum_i x_{i}^p = \sum_i z_{i}^p$ so there must exist such geometric distribution for $p \neq 1$ by Proposition \ref{prop symmetric equality}. Suppose $H_p (X) = H_p (Z)$. Let $U$ be a random variable with density $F_z$, and $V$ be a random variable with density $F_x$. Since $\sum x_{i}^p = \sum z_{i}^p$ then $\frac{1}{p} \sum x_{i}^p = \frac{1}{p} \sum z_{i}^p$, which implies $\mathbb{E}\left(V^{p-1}\right) = \mathbb{E}\left(U^{p-1}\right)$ by Proposition \ref{prop: Cake layer FUbTON}. Let $g(x) = x^{\frac{q-1}{p-1}}$. If $p > 1$ and $q \in (1,p)$, we have that $g(x)$ is concave, thus $\mathbb{E}\left(g(V^{p-1})\right) \geq \mathbb{E}\left(g(U^{p-1})\right)$ by Proposition \ref{prop:doublecrossing} and Theorem \ref{convex ordering consequence}.Thus $\mathbb{E}\left(V^{q-1}\right) \geq \mathbb{E}\left(U^{q-1}\right)$ and, multiplying both sides by $q$ and using Proposition \ref{prop: Cake layer FUbTON}, we get $\sum x_{i}^q \geq \sum z_{i}^q$. The same can be argued if $p < 1$ and $q \in (p,1)$. Now, when $p > 1$ and $q \in (0,1) \cup [p, \infty)$, and when $p < 1$ and $q \in (0, p] \cup (1, \infty)$, $g(x)$ is convex, so by Theorem \ref{convex ordering consequence} the inequality is reversed and $\sum x_{i}^q \leq \sum z_{i}^q$. Now, to pass from the sum to R\'enyi's entropy, we must multiply by $\frac{1}{1-q}$, which reverses the inequality when $q > 1$. So we get
    \begin{align*}
        1 < q < p &\Rightarrow H_q (X) \leq H_q (Z);\\
        q < 1 < p &\Rightarrow H_q (X) \leq H_q (Z); \\
        q < p < 1 &\Rightarrow H_q (X) \leq H_q (Z);\\ 
        1 < p < q &\Rightarrow H_q (X) \geq H_q (Z); \\
        p < 1 < q &\Rightarrow H_q (X) \geq H_q (Z); \\
        p < q < 1 &\Rightarrow H_q (X) \geq H_q (Z).
    \end{align*}
    The limiting cases with $q$ or $p  \in \{1,\infty\}$ can be easily handled using continuity and monotonicity of the R\'enyi entropy as a function of $\alpha \mapsto H_\alpha(X)$ and as a function of the parameter $\lambda$ of a symmetric geometric distribution $Z_\lambda$, $\lambda \to H_\alpha(Z_\lambda)$.
\end{proof}

\begin{thm}
For $X$ log-concave and symmetric about a point $n \in \mathbb{Z}$, $p \geq q$, then
\begin{align*}
     H_q(X) -H_p(X) \leq C(q,p) \coloneqq \sup_Z H_q(Z) - H_p(Z)
\end{align*}
where the supremum is taken over all $Z$ symmetric-geometric.
\end{thm}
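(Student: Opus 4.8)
The plan is to read the statement off directly from Lemma~\ref{lem: symmetric renyi compare}, which already contains all the analytic content. Fix $X$ log-concave and symmetric about a point $n \in \mathbb{Z}$, and let $p \geq q > 0$. Lemma~\ref{lem: symmetric renyi compare} produces a symmetric geometric distribution $Z$ with $H_p(X) = H_p(Z)$, and since $0 < q \leq p$ it also gives $H_q(X) \leq H_q(Z)$. Subtracting the equality at index $p$ then yields
\begin{align*}
    H_q(X) - H_p(X) \;\leq\; H_q(Z) - H_p(Z) \;\leq\; \sup_{Z'}\bigl(H_q(Z') - H_p(Z')\bigr) \;=\; C(q,p),
\end{align*}
where the supremum runs over all symmetric geometric $Z'$; this is exactly the claimed inequality (and when $q = p$ it degenerates to the trivial $0 \leq 0$). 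So the theorem is an immediate corollary of the lemma, and the only substantive thing left to say is why the bound is not vacuous, i.e.\ that $C(q,p) < \infty$.

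For that I would record the explicit formula for a symmetric geometric $Z_\lambda$, namely $\sum_i z_i^p = \frac{(1-\lambda)^p}{(1+\lambda)^p}\cdot\frac{1+\lambda^p}{1-\lambda^p}$ (as in Proposition~\ref{prop symmetric equality}), so that $H_q(Z_\lambda) - H_p(Z_\lambda)$ is a continuous function of $\lambda$ on $(0,1)$. As $\lambda \to 0$ the distribution $Z_\lambda$ tends to the point mass at $0$ and both R\'enyi entropies tend to $0$, so the difference tends to $0$; as $\lambda \to 1$ a direct computation (of the same flavour as the one following Theorem~\ref{thm: Renyi entropy reversal}) shows the difference tends to a finite limit. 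Hence $C(q,p)$ is finite, and in fact attained at some $\lambda_\ast \in (0,1)$.

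I do not expect a genuine obstacle: the real work is already packaged into the double-crossing property (Propositions~\ref{prop:doublecrossing} and~\ref{symm double crossing}) and the convex-ordering comparison (Lemma~\ref{convex ordering lemma} and Theorem~\ref{convex ordering consequence}) underlying Lemma~\ref{lem: symmetric renyi compare}. The mild care needed is only in the endpoint cases where $p$ or $q \in \{1,\infty\}$ — handled, as in the proof of that lemma, by continuity and monotonicity of $\alpha \mapsto H_\alpha(X)$ and of $\lambda \mapsto H_\alpha(Z_\lambda)$ together with the intermediate value theorem to realize the matching condition $H_p(X) = H_p(Z)$ — and in the elementary one-variable analysis of $C(q,p)$ just sketched.
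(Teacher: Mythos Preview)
Your argument is correct and matches the paper's proof essentially verbatim: invoke Lemma~\ref{lem: symmetric renyi compare} to get a symmetric geometric $Z$ with $H_p(X)=H_p(Z)$ and $H_q(X)\le H_q(Z)$, then subtract. The additional remarks you make on finiteness of $C(q,p)$ and the endpoint cases $p,q\in\{1,\infty\}$ are not in the paper's proof of this theorem, but they are harmless and correct elaborations.
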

\begin{proof}
For any $p \neq 1$ and $q \leq p$ we have, by Lemma \ref{lem: symmetric renyi compare}, a symmetric geometric $Z$ with $H_p (X) = H_p (Z)$ and
\begin{align*}
    H_q (X) \leq H_q (Z),
\end{align*}
which implies 
\begin{align*}
    H_q (X) - H_p (X) \leq H_q (Z) - H_p (Z) \leq \sup_Z H_q (Z) - H_p (Z).
\end{align*}
\end{proof}

\begin{thm} \label{thm: varentropy for symmetric}
For $X$ log-concave and symmetric,
\begin{align*}
    V(X) \leq V_S \coloneqq \sup_Z V(Z)
\end{align*}
where the supremum is taken over all $Z$ symmetric-geometric.
\end{thm}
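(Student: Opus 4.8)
The plan is to pin the varentropy to the second derivative of $\Psi_X(t)\coloneqq \log\bigl(t\sum_n f^t(n)\bigr)$ at $t=1$, and then to harvest that second-order information from the R\'enyi comparisons of Lemma~\ref{lem: symmetric renyi compare} through a limiting argument as the comparison parameter tends to $1$. Exactly as in the proof of Theorem~\ref{thm: varentropy bounds}, one has $\Psi_X''(1)=V(X)-1$, and likewise $\Psi_Z''(1)=V(Z)-1$ for every symmetric geometric $Z$ (for which $\Psi_Z$ is an explicit smooth function of $t$, as in Proposition~\ref{prop: Special case geo distr}); so it suffices to produce a single symmetric geometric $Z$ with $\Psi_X''(1)\le \Psi_Z''(1)$. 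We may assume $X$ is not Dirac, and, after translating, that its symmetry point is $0$; if $X$ is symmetric about a half-integer the claim is immediate from Corollary~\ref{cor: Symmetric about a half point} (which gives $V(X)<1\le V_S$), so assume $X$ is symmetric about $0\in\mathbb{Z}$.

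First, for each $p>1$ use Proposition~\ref{prop symmetric equality} to fix a symmetric geometric $Z^{(p)}$, of parameter $\lambda(p)\in(0,1)$, with $\sum_i (z^{(p)}_i)^p=\sum_i x_i^p$, and set $g_p\coloneqq \Psi_X-\Psi_{Z^{(p)}}$. Rewriting $\Psi_Y(q)=\log q+(1-q)H_q(Y)$ and applying Lemma~\ref{lem: symmetric renyi compare} to $X$ and $Z^{(p)}$ (which agree in $p$-R\'enyi entropy), one reads off that $g_p$ vanishes at $1$ and at $p$, is nonnegative on $(1,p)$, and is nonpositive on $(0,1)\cup(p,\infty)$; the two sign changes reflect the factor $1-q$ changing sign at $q=1$ and the direction of the R\'enyi comparison reversing at $q=p$.

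Next, let $p\to 1^{+}$ and choose a subsequence $p_k\downarrow 1$ with $\lambda(p_k)\to\lambda_*\in[0,1]$. Both endpoints are excluded by the sign constraints on $g_{p_k}$ off $[1,p_k]$: if $\lambda_*=0$ then $\Psi_{Z^{(p_k)}}(t)\to\log t$ for every $t>0$, so $g_{p_k}\le 0$ on $(0,1)$ would force $\sum_n f^t(n)\le 1$ for $t\in(0,1)$, impossible for non-Dirac $f$ since $f^t(n)\ge f(n)$ when $t<1$; if $\lambda_*=1$ then $\Psi_{Z^{(p_k)}}(t)\to-\infty$ for every $t>1$, so $g_{p_k}(t)\to+\infty$, contradicting $g_{p_k}(t)\le 0$ (valid once $p_k<t$). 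Hence $\lambda_*\in(0,1)$ and $g_{p_k}\to g_*\coloneqq\Psi_X-\Psi_{Z_{\lambda_*}}$ pointwise, with $g_*$ twice continuously differentiable on $(0,\infty)$. For fixed $t\in(0,1)$ we have $g_{p_k}(t)\le 0$ for all $k$, and for fixed $t>1$ we have $g_{p_k}(t)\le 0$ once $p_k<t$; together with $g_{p_k}(1)=0$ this yields $g_*\le 0=g_*(1)$ on all of $(0,\infty)$. Thus $g_*$ attains a global maximum at the interior point $1$, so $g_*''(1)\le 0$, i.e.\ $\Psi_X''(1)\le\Psi_{Z_{\lambda_*}}''(1)$, and therefore $V(X)\le V(Z_{\lambda_*})\le V_S$.

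The main obstacle is precisely this limiting step: the interval $(1,p)$ on which the sign of $g_p$ points the favorable way collapses to a point as $p\to1$, so one must argue that the matched geometric parameters $\lambda(p)$ stay in a compact subinterval of $(0,1)$ --- equivalently, that the limiting geometric still ``hugs'' $\Psi_X$ from above to second order at $1$ --- and ruling out $\lambda_*\in\{0,1\}$ as above is exactly what achieves this. The remaining ingredients (continuity of $\lambda\mapsto\Psi_{Z_\lambda}(t)$ and of $\lambda\mapsto V(Z_\lambda)$, finiteness of $\sum_n f^t(n)$ for all $t>0$ since log-concave $\ell_1$ implies $\ell_t$, and $V_S<\infty$) are all immediate from the explicit geometric formulas.
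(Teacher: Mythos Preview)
Your proof is correct, but it takes a longer route than the paper's. The paper works with $\Psi_X(t)=\log\sum_i x_i^{t+1}$ and invokes Lemma~\ref{lem: symmetric renyi compare} \emph{directly at} $p=1$: choosing a symmetric geometric $Z$ with $H(Z)=H(X)$ forces $\Psi_Z(0)=\Psi_X(0)$ and $\Psi_Z'(0)=\Psi_X'(0)$, while the Lemma gives the one-sided domination $\Psi_Z(t)\ge\Psi_X(t)$ for $t>0$; comparing quadratic Taylor terms then yields $V(X)=\Psi_X''(0)\le\Psi_Z''(0)=V(Z)$ immediately, with no compactness step.

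Your argument instead matches $H_p$ for $p>1$ and lets $p\downarrow 1$, which forces you to establish that $\lambda(p)$ stays bounded away from $0$ and $1$ before you can read off the second-order inequality at $1$. This is perfectly valid, and in a sense more self-contained: you never rely on the $p=1$ case of Lemma~\ref{lem: symmetric renyi compare}, which in the paper is itself obtained by a continuity argument. In fact your limiting $Z_{\lambda_*}$ necessarily satisfies $H(Z_{\lambda_*})=H(X)$ (since $g_*'(1)=0$), so you are reconstructing the very $Z$ the paper picks at the outset. The cost is the extra compactness machinery; the benefit is that you see explicitly why the matched geometric parameter cannot degenerate. One small remark: your appeal to $V(X)<1\le V_S$ in the half-integer case silently uses $V_S\ge 1$, which is true (numerically $V_S\approx 1.169$) but deserves a word.
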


\begin{proof}
Let $\Psi_X(t) = \log \sum_i x_i^{t + 1}$ where $x_i = \mathbb{P}(X = i)$.  Observe that
\begin{align*}
    \Psi_X(0) &= 1
    \\
    \Psi_X'(0) &= - H(X)
    \\
    \Psi_X''(0) &= V(X)
\end{align*}
Choose $Z$ to be a symmetric geometric distribution satisfying $H(Z) = H(X)$.  By Lemma \ref{lem: symmetric renyi compare}, $H_{1+t}(Z) \leq H_{1+t}(X)$ for $t >0$, which corresponds to $\Psi_Z(t) \geq \Psi_X(t)$ for $t >0$.  By Taylor expansion,
\begin{align*}
    \Psi_X(t) 
        &=
            \Psi_X(0) + \Psi_X'(0) t + \frac{t^2}{2} \Psi_X''(0) + o(t^2)
                \\
        &\leq 
            \Psi_Z(0) + \Psi_Z'(0) t + \frac{t^2}{2} \Psi_Z''(0) + o(t^2)
                \\
        &=
            \Psi_Z(t) 
\end{align*}
Since the Taylor expansions are identical up to linear terms, it follows that $\Psi_Z''(0) = V(Z) \geq V(X) = \Psi_X''(0)$.
\end{proof}

Note that if one expresses the distribution of a symmetric-geometric variable $Z_\lambda$ as $\frac{1-\lambda}{1+\lambda} \lambda^{|k|}$, its varentropy has the closed form expression,
\begin{align*}
    V(Z_\lambda) = \log^2(\lambda) \left(\frac{2\lambda}{1-\lambda} -\left(\frac{2\lambda}{(1-\lambda)(1+\lambda)}\right)^2 \right).
\end{align*}
Numerically, we have $V_S \approx 1.16923$.  This is used for the following corollary.
% Using $K = \sup_{p \in (0,1)} V(Z_p) \approx 1.16923$ in Lemma \ref{lem: FMW}, one could derive a theorem similar to Theorem \ref{thm: concentration of information} for sequences symmetric about a point $n$.

\begin{cor}
    For $X$ with distribution $f$ log-concave and symmetric on $\mathbb{Z}$, and $t \geq 0$,
    \begin{align*}
        \mathbb{P}( I(X) - H(f) \geq t ) \leq \left( 1 + \frac t V \right)^{V} e^{-t}
    \end{align*}
    and
    \begin{align*}
        \mathbb{P}( I(X) - H(f) \leq - t) \leq \left( 1 - \frac{t}{V} \right)^{V} e^t
    \end{align*}
    where $V \coloneqq V_S \approx 1.16923$ is defined in Theorem \ref{thm: varentropy for symmetric} 
\end{cor}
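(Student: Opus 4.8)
The plan is to apply Lemma~\ref{lem: FMW} directly, taking $K = V_S$ as the uniform varentropy bound. The only thing that needs checking is that the hypothesis of Lemma~\ref{lem: FMW} is satisfied, namely that $\sup_{\alpha > 0} V(X_\alpha) \leq V_S$, where $X_\alpha \sim f^\alpha / \sum_n f^\alpha(n)$.

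**First I would** observe that if $X$ is log-concave and symmetric (about a point $n \in \mathbb{Z}$), then for any $\alpha > 0$ the tilted sequence $f^\alpha(n)$ is again log-concave (the defining inequality $x_i^2 \geq x_{i-1}x_{i+1}$ is preserved under raising to a positive power, as is the support-interval condition) and manifestly still symmetric about the same point. Hence $X_\alpha$ is again log-concave and symmetric, so Theorem~\ref{thm: varentropy for symmetric} applies to $X_\alpha$ and gives $V(X_\alpha) \leq V_S$. This holds uniformly in $\alpha$, so $K = \sup_{\alpha>0} V(X_\alpha) \leq V_S$.

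**Then** I would invoke Lemma~\ref{lem: FMW} with this $K$. The lemma gives
\begin{align*}
    \mathbb{P}(I(X) - H_{(\mu)}(f) \geq t) \leq e^{-K r(t/K)}, \qquad
    \mathbb{P}(I(X) - H_{(\mu)}(f) \leq -t) \leq e^{-K r(-t/K)},
\end{align*}
with $r(s) = s - \log(1+s)$. Unwinding $e^{-K r(t/K)} = e^{-t}(1 + t/K)^{K}$ and similarly $e^{-K r(-t/K)} = e^{t}(1 - t/K)^{K}$ gives exactly the stated bounds with $V = K$. There is one subtlety: Lemma~\ref{lem: FMW} is stated with a fixed bound $K$, but here I only know $K \leq V_S$; since $s \mapsto e^{-s}(1+s/K)^K$ is increasing in $K$ for fixed $s \geq 0$ (and likewise the lower-tail expression), replacing $K$ by the larger $V_S$ only weakens the bound, so the displayed inequalities with $V = V_S$ follow. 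Also $H_{(\mu)}(f) = H(f)$ here since $\mu$ is counting measure.

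**The main obstacle** is essentially bookkeeping rather than substance: one must confirm that the density $f$ of a log-concave $\ell_1$ sequence lies in $L^\alpha$ for all $\alpha > 0$ (noted already in the Definitions subsection, since log-concavity plus $\ell_1$ implies $\ell_p$ for all $p$), that the differentiability footnote in Lemma~\ref{lem: FMW} applies in the discrete setting, and that symmetry-preservation under tilting is genuinely immediate. None of these require real work; the monotonicity in $K$ of the tail functions is the one place a short lemma-free calculation is needed, and it is elementary. I would present the proof in just a few lines: tilting preserves symmetric log-concavity, hence Theorem~\ref{thm: varentropy for symmetric} bounds all $V(X_\alpha)$ by $V_S$, hence Lemma~\ref{lem: FMW} applies with $K \leq V_S$, and monotonicity in $K$ yields the claim.
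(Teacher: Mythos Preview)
Your proposal is correct and follows essentially the same approach as the paper's proof, which simply reads ``The result follows from combining Lemma~\ref{lem: FMW} and Theorem~\ref{thm: varentropy for symmetric}.'' Your observation that tilting preserves symmetric log-concavity (so that Theorem~\ref{thm: varentropy for symmetric} applies to every $X_\alpha$) is exactly the content implicit in that sentence, mirroring the paper's explicit argument in the monotone case (Theorem~\ref{thm: concentration of information}).

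One minor simplification: the detour through $K = \sup_{\alpha>0} V(X_\alpha)$ followed by monotonicity in $K$ is unnecessary. Since $V(X_\alpha) \leq V_S$ holds for every $\alpha$, the hypothesis of Lemma~\ref{lem: FMW} is satisfied directly with $K = V_S$, and the stated bounds drop out immediately without any comparison of tail functions in $K$.
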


\begin{proof}
The result follows from combining Lemma \ref{lem: FMW} and Theorem \ref{thm: varentropy for symmetric}.
\end{proof}
\hspace{5cm}

\appendix
\section{Majorization} \label{sec: Appendix}
The following theorem is a well known characterization of the convex order, see \cite{marshall2010inequalities} for proof and further background.
\begin{thm}
For $X$ and $Y$ are random variables on $[0,\infty)$ such that $\mathbb{E}X = \mathbb{E}Y < \infty$, then 
\begin{align} \label{eq: convex order}
     \mathbb{E}\varphi(Y)\geq \mathbb{E}\varphi(X)
\end{align}
holds for all convex functions $\varphi$, if it holds for all $\varphi$ of the form $\varphi(x) = [x - t]_+$ for $t \in (0,\infty)$.  
\end{thm}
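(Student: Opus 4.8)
The plan is to peel off the general convex function $\varphi$ in two stages, reducing it first to piecewise-linear convex functions and then to the generators $x\mapsto[x-t]_+$ supplied by hypothesis. First I would dispose of affine functions: since $\mathbb{E}X=\mathbb{E}Y<\infty$, for every $a,b\in\mathbb{R}$ one has $\mathbb{E}(a+bX)=a+b\,\mathbb{E}X=a+b\,\mathbb{E}Y=\mathbb{E}(a+bY)$, so \eqref{eq: convex order} holds with equality for affine $\varphi$. This is the one place where the two-sided assumption $\mathbb{E}X=\mathbb{E}Y$ is genuinely needed (rather than a one-sided comparison), since a convex function on $[0,\infty)$ may have negative slope on part of its domain.

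Next I would treat a convex function $\varphi$ that is the maximum of finitely many affine functions. Reading its graph from left to right, such a $\varphi$ can be written as
\[
    \varphi(x)=a+bx+\sum_{i=1}^{n}c_i\,[x-t_i]_+,
\]
where $b$ is the slope of the leftmost piece, the $t_i\in(0,\infty)$ are the breakpoints in the interior of the domain, and the increments of slope $c_i\ge 0$ are nonnegative by convexity (any breakpoint at $0$ only affects the affine part). Then by linearity of expectation, the settled affine case, the hypothesis $\mathbb{E}[Y-t_i]_+\ge\mathbb{E}[X-t_i]_+$, and $c_i\ge0$,
\[
    \mathbb{E}\varphi(Y)=a+b\,\mathbb{E}Y+\sum_{i=1}^{n}c_i\,\mathbb{E}[Y-t_i]_+\ \ge\ a+b\,\mathbb{E}X+\sum_{i=1}^{n}c_i\,\mathbb{E}[X-t_i]_+=\mathbb{E}\varphi(X).
\]

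Finally, for a general convex $\varphi$ on $[0,\infty)$ I would fix a supporting line $\ell(x)=a_0+b_0x\le\varphi$ at some interior point and set $\varphi_n\coloneqq\max\{\ell,\ell_1,\dots,\ell_n\}$, where $\ell_1,\ell_2,\dots$ enumerate supporting lines of $\varphi$ at a countable dense subset of $(0,\infty)$. Each $\varphi_n$ is a finite maximum of affine functions, hence covered by the previous step, and $\ell\le\varphi_n\uparrow\varphi$ pointwise on $(0,\infty)$. Since $\mathbb{E}|\ell(X)|<\infty$ and $\mathbb{E}|\ell(Y)|<\infty$ (because $X,Y\ge0$ have finite mean), the sequences $\varphi_n(X)$ and $\varphi_n(Y)$ are bounded below by integrable functions, so monotone convergence yields $\mathbb{E}\varphi(Y)=\lim_n\mathbb{E}\varphi_n(Y)\ge\lim_n\mathbb{E}\varphi_n(X)=\mathbb{E}\varphi(X)$, with both sides allowed to equal $+\infty$. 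The main point requiring care is this last step: choosing the monotone piecewise-linear minorants so that they genuinely increase to $\varphi$, and arranging the integrability bookkeeping so that monotone convergence applies; the rest is routine linearity of expectation.
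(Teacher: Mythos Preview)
Your proof is correct and follows the standard route: reduce to piecewise-linear convex functions via the representation $a+bx+\sum_i c_i[x-t_i]_+$ with $c_i\ge 0$, handle the affine part using $\mathbb{E}X=\mathbb{E}Y$, and pass to the general case by monotone approximation from below with supporting lines, using a fixed supporting line as an integrable minorant so that monotone convergence applies. The only cosmetic point is to note that the pointwise convergence $\varphi_n\uparrow\varphi$ also holds at $x=0$ (take supporting lines at points $s_k\downarrow 0$ and use that a convex function on $[0,\infty)$ satisfies $\varphi(0)\ge\varphi(0^+)$, while the supporting lines already recover $\varphi(0^+)$; if $\varphi(0)>\varphi(0^+)$ one simply includes the degenerate ``supporting line'' at $0$ in the family), but this does not affect the argument when $X,Y$ have no atom at $0$ and is a routine patch otherwise.

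As for comparison with the paper: the paper does not prove this theorem at all. It is stated in the appendix as a well-known characterization of the convex order, with a citation to Marshall--Olkin--Arnold for the proof. Your argument is precisely the classical one that such a reference would contain, so there is no methodological divergence to discuss.
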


When $X$ and $Y$ satisfy \eqref{eq: convex order} we say that $Y$ majorizes $X$ in the convex order, or just that $Y$ majorizes $X$ for short, and write $Y \succ X$.

\begin{thm} \label{thm: convex ordering proof for two crossing}
For non-negative random variables $X \sim f$ and $Y \sim g$ with densities taking values on $[0,\infty)$ such that $\mathbb{E}X = \mathbb{E}Y < \infty$, if there exists an interval $I  \subseteq [0,\infty)$ such that $g \leq f$ on $I$, and $g \geq f$ on $[0,\infty) - I$, then $Y \succ X$.
\end{thm}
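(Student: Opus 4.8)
The plan is to reduce to the preceding characterization of the convex order: it suffices to check $\mathbb{E}\varphi(Y)\ge\mathbb{E}\varphi(X)$ for the test functions $\varphi(x)=[x-t]_+$, $t\in(0,\infty)$. Using the layer‑cake identity $[y-t]_+=\int_t^\infty \mathbbm{1}_{\{y>s\}}\,ds$ together with Tonelli, this amounts to showing
\[
H(t)\coloneqq \int_t^\infty\big(\bar G(s)-\bar F(s)\big)\,ds\ \ge\ 0\qquad\text{for all }t>0,
\]
where $\bar F(s)=\mathbb{P}(X>s)$ and $\bar G(s)=\mathbb{P}(Y>s)$; the integrals are finite because $\mathbb{E}X=\mathbb{E}Y<\infty$. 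Set $\Delta\coloneqq \bar G-\bar F$. At the outset I would record three facts: since $f,g$ are honest densities (no atoms), $\bar F(0)=\bar G(0)=1$, so $\Delta(0)=0$; since $X,Y$ are a.s.\ finite, $\Delta(s)\to 0$ as $s\to\infty$; and $\int_0^\infty\Delta = \mathbb{E}Y-\mathbb{E}X=0$.

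The crux is that $\Delta$ changes sign at most once, from nonpositive to nonnegative. Writing $\Delta(s)=\int_s^\infty(g-f)$, for $s_1<s_2$ one has $\Delta(s_2)-\Delta(s_1)=\int_{s_1}^{s_2}(f-g)$, so $\Delta$ is non-decreasing where $f\ge g$ (that is, on $I$) and non-increasing where $f\le g$ (that is, on $[0,\infty)\setminus I$); this sidesteps any differentiability issue. Let $a=\inf I$ and $b=\sup I$. On $[0,a]$, $\Delta$ is non-increasing and $\Delta(0)=0$, so $\Delta\le 0$ there; on $[b,\infty)$, $\Delta$ is non-increasing with limit $0$, so $\Delta\ge 0$ there; and on $[a,b]$, $\Delta$ is continuous and non-decreasing, passing from a value $\le 0$ at $a$ to a value $\ge 0$ at $b$, hence $\Delta(c)=0$ for some $c\in[a,b]$ with $\Delta\le 0$ on $[a,c]$ and $\Delta\ge 0$ on $[c,b]$. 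Altogether there is $c\in[0,\infty]$ with $\Delta\le 0$ on $[0,c]$ and $\Delta\ge 0$ on $[c,\infty)$. (When $c\in\{0,\infty\}$, or when $I$ contains $0$ or is unbounded, $\Delta$ has a single sign on all of $[0,\infty)$, and $\int_0^\infty\Delta=0$ forces $\Delta\equiv 0$, i.e.\ $X\stackrel{d}{=}Y$, so the conclusion is trivial.)

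Finishing is then immediate. For $t\ge c$, $H(t)=\int_t^\infty\Delta\ge 0$ since $\Delta\ge 0$ on $[t,\infty)$. For $t<c$, $H(t)=\int_0^\infty\Delta-\int_0^t\Delta=-\int_0^t\Delta\ge 0$ since $\Delta\le 0$ on $[0,t]\subseteq[0,c]$. Hence $H\ge 0$ on $(0,\infty)$, and the cited characterization of $\succ$ gives $Y\succ X$.

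The only delicate point will be the bookkeeping at the endpoints of $I$ and the degenerate configurations — in particular, ruling out a genuine double crossing of the survival functions, which is exactly what the normalization $\Delta(0)=0$ (available because $f,g$ are densities) prevents. The rest — the layer‑cake identity, Tonelli, and the monotonicity of $\Delta$ on $I$ and off $I$ — is routine.
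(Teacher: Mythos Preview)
Your argument is correct and follows essentially the same route as the paper: both reduce to showing $\Psi(t)=\mathbb{E}[Y-t]_+-\mathbb{E}[X-t]_+\ge 0$ by exploiting that the single-interval crossing of the densities forces $\bar G-\bar F$ (the paper's $-\Psi'$) to change sign at most once, together with the boundary data $\Psi(0)=\Psi(\infty)=0$ and $\int_0^\infty(\bar G-\bar F)=0$. Your version is in fact a cleaner execution, since you track the monotonicity of $\Delta$ via integrals rather than via $\Psi''$, thereby sidestepping any regularity issues, and you handle the endpoint and degenerate cases explicitly.
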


\begin{proof}
For $ t \in [0,\infty)$ define $\Psi(t) =  \mathbb{E}[Y-t]_+ - \mathbb{E}[X - t]_+$.  By assumption $\mathbb{E}X = \mathbb{E}Y$, and hence $\Psi(0) = 0$.   By monotone convergence, $\lim_{t \to \infty} \Psi(t) = 0$.  Computing the derivatives of $\Psi$, one obtains $\Psi'(t) =\mathbb{P}(X > t) -  \mathbb{P}(Y > t) $, and $\Psi''(t) = g(t) - f(t)$.  Observe that $\Psi'(0) = 0$, $\lim_{t \to \infty} \Psi'(t) = 0$, and $0 = \mathbb{E}Y - \mathbb{E}X = \int_0^\infty \Psi'(t) dt$.  Thus, $\Phi'$ must be both positive and negative or it is exactly $0$ and the problem is trivial.  As such $\Phi''$ is positve, negative, and then positive.  It follows that $\Phi'$ is positive and then negative, and hence $\Phi \geq 0$ and our result follows.
\end{proof}

% \section{Log-concavity of moments of concave functions}
% Define for $q > -1$ and $x > 0$
% \begin{align*}
%     \Phi_{-1}(q) \coloneqq 0
% \end{align*}
% and inductively
% \begin{align*}
%     \Phi_{n}(q) = (n + x)^q + \Phi_{n}(q)
% \end{align*}
% so that $\Phi_n(q) = \sum_{k=0}^n (k+x)^q$ for $n \geq 0$.  We claim that $(q+1)\Phi_n(q)$ is log-concave on $\mathbb{N}$.

% \begin{lem}
% For integers $q \geq -1$,
% \begin{align*}
%     (q+1)^2 \Phi_n^2(q) \geq q(q+2) \Phi_n(q-1) \Phi_n(q+1).
% \end{align*}
% \end{lem}

% \begin{proof}

% \end{proof}
\bibliographystyle{plain}
\bibliography{bibibi}

\vspace{2cm}

\noindent James Melbourne \\
Centro de Investigaci\'ones en Matem\'aticas \\
Guanajuato, GTO, MX \\
E-mail: james.melbourne@cimat.mx\\

\vspace{1cm}
\noindent Gerardo Palafox-Castillo \\
Universidad Aut\'onoma de Nuevo Le\'on\\
San Nicol\'as de los Garza, NL, MX\\
E-mail: gerardo.palafoxcstl@uanl.edu.mx
\end{document}